\pdfoutput=1
\documentclass[american,a4paper]{dmtcs}
\usepackage[utf8]{inputenc}
\usepackage[T1]{fontenc}
\usepackage{babel}
\input{ushyphex.tex} 

\pagestyle{headings}
\date{\small\today}

\usepackage[round]{natbib}
\bibliographystyle{abbrvnat}
\setlength{\bibsep}{0.05\bibsep}

\usepackage{xspace}
\usepackage{enumitem}
\usepackage{array}
\usepackage{booktabs}
\usepackage{multicol}
\usepackage{multirow}
\usepackage[referable]{threeparttablex}
\usepackage{colortbl}
\usepackage[tight,TABBOTCAP]{subfigure}

\usepackage{wref}

\usepackage{xfrac}
\usepackage{amsmath}
\usepackage{amssymb}
\usepackage{dsfont}
\usepackage{bm}
\usepackage{mathtools}
\usepackage{stmaryrd}
\usepackage{tikz}
\usepackage{colonequals}
\usepackage{microtype}
\usepackage
	{url}
\usepackage{ifthen}
\usepackage{placeins}
\usepackage{needspace}

\usetikzlibrary{positioning,decorations.pathreplacing,external,calc}
\usepackage{pgfplots}
\pgfplotsset{compat=1.5}

\pgfdeclarelayer{background}
\pgfsetlayers{background,main}

\usepackage{clrscode3e}

\def\EndFor{\End\li\kw{end for} }
\def\EndIf{\End\li\kw{end if} }
\def\EndWhile{\End\li\kw{end while} }

\usepackage{float}
\floatstyle{ruled}
\newfloat{algorithm}{tbp}{loa}
\floatname{algorithm}{Algorithm}

    \setcounter{topnumber}{2}
    \setcounter{bottomnumber}{2}
    \setcounter{totalnumber}{4}     
    \setcounter{dbltopnumber}{2}    

\hypersetup{
	pdftitle={Pivot Sampling in Dual Pivot Quicksort - Exploiting
		Asymmetries in Yaroslavskiy's Partitioning Scheme}, 
	pdfauthor={Markus E. Nebel and Sebastian Wild}, 
	pdfsubject={},
	pdfkeywords={Quicksort, dual-pivot, Yaroslavskiy's partitioning method,
	median of three, average case analysis}%
}

\usepackage[amsmath,hyperref,thmmarks]{ntheorem}

\theorembodyfont{\slshape}
\theoremseparator{:}
\makeatletter
\newtheoremstyle{proofstyle}%
  {\item[\theorem@headerfont\hskip\labelsep ##1\theorem@separator]}%
  {\item[\theorem@headerfont\hskip\labelsep ##1 of ##3\theorem@separator]}
\makeatother

\newtheorem{theorem}{Theorem}[section]

\theoremstyle{plain}
\setlength\theorempreskipamount{\topsep}

\newtheorem{proposition}[theorem]{Proposition}
\newtheorem{lemma}[theorem]{Lemma}

\theoremstyle{plain}
\theorembodyfont{\upshape}

\theoremsymbol{\raisebox{-.25ex}{$\Box$}}
\qedsymbol{\raisebox{-.25ex}{$\Box$}}

\theoremstyle{proofstyle}
\renewtheorem{proof}{Proof}

\setlength\parindent{1.5em}

\tolerance 1414
\hbadness 1414
\emergencystretch 1.5em	 
\hfuzz 0.3pt	
\vfuzz \hfuzz	

\widowpenalty=10000
\clubpenalty=10000

\raggedbottom

\allowdisplaybreaks[3]

\numberwithin{equation}{section}

\setcounter{tocdepth}{1} 

\newcommand\weakemph[1]{\textsl{#1}}
 


\newdimen\makeboxdimen
\newcommand\makeboxlike[3][l]{%
\setbox0=\hbox{#2}%
\global\makeboxdimen=\wd0%
\setbox1=\hbox{\makebox[\makeboxdimen][#1]{%
\makebox[0pt][#1]{#3}%
}}%
\ht1=\ht0%
\dp1=\dp0%
\box1%
}

\newcommand\plaincenter[1]{%
	\mbox{}\hfill#1\hfill\mbox{}%
}

\newcounter{inlineenum}

\newcommand*\ie{\mbox{\textit{i.\hspace{.2ex}e.}}}
\newcommand*\eg{\mbox{\textit{e.\hspace{.2ex}g.}}}

\newcommand\E{\mathop{\mbox{$\mathbb{E}$}}\nolimits}
\newcommand\given{\;|\;}

\newcommand\R{\reals}
\newcommand\N{\naturals}

\newcommand\Oh{O}
\def\.{\mskip1mu}

\newcommand{\Prob}{\ensuremath{\mathbb{P}}}

\newcommand\ui[2]{#1^{\smash{(}#2\smash{)}}}

\newcommand{\vect}[1]{\boldsymbol{\mathbf{#1}}}

\newcommand\eqdist{	
	\mathchoice{
		\mathrel{\overset{\raisebox{0ex}{$\scriptstyle \cal D$}}=}%
	}{
		\mathrel{\like{=}{%
			\overset{\raisebox{-1ex}{$\scriptscriptstyle \cal D$}}=%
		}}%
	}{
		\mathrel{\overset{\cal D}=}%
	}{
		\mathrel{\overset{\cal D}=}%
	}%
}
\newcommand\eqdistt{\mathrel{\smash{\stackrel{\scriptscriptstyle\mathcal D}=}}}

\newcommand\ppe{\phantom{=}}

\newcommand\like[3][c]{\makeboxlike[#1]{\ensuremath{#2}}{\ensuremath{#3}}}

\newcommand\uniform{\mathcal U}
\newcommand\bernoulli{\mathrm B}
\newcommand\hypergeometric{\mathrm{HypG}}
\newcommand\multinomial{\mathrm{Mult}}
\newcommand\binomial{\mathrm{Bin}}
\newcommand\dirichlet{\mathrm{Dir}}

\newcommand\BetaFun{\mathrm B}

\renewcommand\given{\mathbin{\mid}}
\newcommand\harm[1]{\ensuremath{\mathcal{H}_{#1}}}

\newcommand\ce{\colonequals}

\newcommand\rel[1]{\mathrel{\:{#1}\:}}
\newcommand\wrel[1]{\mathrel{\;{#1}\;}}
\newcommand\wwrel[1]{\mathrel{\;\;{#1}\;\;}}
\newcommand\bin[1]{\mathbin{\:{#1}\:}}
\newcommand\wbin[1]{\mathbin{\;{#1}\;}}

\newcommand{\eqwithref}[2][c]{%
	\relwithref[#1]{#2}{=}%
}
\newcommand{\relwithref}[3][c]{%
	\mathrel{\underset{\mathclap{\makebox[\widthof{$=$}][#1]{\scriptsize\wref{#2}}}}{#3}}%
}

\newcommand\toll[2][]{%
	\ensuremath{%
	\ifthenelse{\equal{#1}{}}{%
		T_{\!#2}%
	}{%
		T_{\!#2}({#1})%
	}}%
}

\newcommand\istoll[2][]{%
	\ensuremath{%
	\ifthenelse{\equal{#1}{}}{%
		\iscost_{\!#2}%
	}{%
		\iscost_{\!#2}({#1})%
	}}%
}

\newcommand\insertsortcost{W}
\newcommand\iscost{\insertsortcost}

\newcommand\bytecodes{\mathit{BC}}

\newcommand\values[1]{#1}

\newcommand\positionsets[1]{\mathcal{#1}}

\newcommand\numberat[2]{\values{#1}\mbox{\emph{@}}\.\positionsets{#2}}
\newcommand\satK{\numberat sK}
\newcommand\latK{\numberat lK}
\newcommand\satG{\numberat sG}

\newcommand\indicator[1]{\mathds 1_{\{#1\}}}

\newcommand\discreteEntropy[1][\vect t]{\ensuremath{\mathrm{H}(#1)}\xspace}
\newcommand\contentropy[1][\vect\tau]{%
	\ensuremath{\mathchoice{
		{\mathrm{H^*}} \ifthenelse{\equal{#1}{}}{ }{ (#1) }
	}{
		{\mathrm{H^*}} \ifthenelse{\equal{#1}{}}{ }{ (#1) }
	}{
		{\mathrm{H}}^* \ifthenelse{\equal{#1}{}}{ }{ (#1) }
	}{
		{\mathrm{H}}^* \ifthenelse{\equal{#1}{}}{ }{ (#1) }
	}}\xspace%
}

\newcommand\arrayA{%
	\ensuremath{\mathchoice{
		\smash{\raisebox{-.2pt}{\scalebox{1.25}[1.18]{$\mathtt{A}$}}}%
	}{
		\smash{\raisebox{-.2pt}{\scalebox{1.25}[1.18]{$\mathtt{A}$}}}%
	}{
		\smash{\raisebox{-.2pt}{\scalebox{1.25}[1.18]{$\scriptstyle\mathtt{A}$}}}%
	}{
		\smash{\raisebox{-.2pt}{\scalebox{1.25}[1.18]{$\scriptscriptstyle\mathtt{A}$}}}%
	}}\xspace%
}

\usepackage{manfnt}

\newcommand*\generalYaros[2]{%
	\ensuremath{Y_{#1}^{#2}}\xspace%
}
\newcommand*\generalYarostM{%
	\generalYaros{\mkern-1mu \vect t}{\isthreshold}%
}
\newcommand*\isthreshold{\ensuremath{\mathnormal{w}}\xspace}

\let\oldparagraph\paragraph
\makeatletter%
\renewcommand\paragraph{%
    \@ifstar{\myparagraphStar}{\myparagraphNoStar}%
}
\makeatother%
\newcommand\myparagraphStar[1]{%
	\oldparagraph*{#1.}%
}
\newcommand\myparagraphNoStar[2][]{%
	\ifthenelse{\equal{#1}{}}{%
		\oldparagraph[#2]{#2.}%
	}{%
		\oldparagraph[#1]{#2.}%
	}%
}

\colorlet{symmetriccolor}{black!10}

\hyphenation{Yar-os-lav-skiy}

\hypersetup{
 	 breaklinks=true,
 	 pdfborder={0 0 0}, 
 	 colorlinks=false   
}

\makeatletter

\renewcommand\l@section[2]{\@dottedtocline{1}{1.5em}{2.3em}{\textbf{#1}}{\textbf{#2}}}
\renewcommand\l@section[2]{\@dottedtocline{1}{1.5em}{2.3em}{#1}{#2}}
\makeatother

\let\notag\relax
\let\nonumber\relax

\title[Pivot Sampling in Dual-Pivot Quicksort]{%
	Pivot Sampling in Dual-Pivot Quicksort\\%
	{\Large Exploiting Asymmetries in Yaroslavskiy's Partitioning Scheme} 
	}

\author{%
	Markus E.\ Nebel\addressmark{1}\addressmark{2}%
	\thanks{The order of authors follows the Hardy-Littlewood rule, \ie, 
	it is alphabetical by last name.}%
	\and Sebastian Wild\addressmark{1}
}

\address{%
	\addressmark{1}%
	Computer Science Department, University of Kaiserslautern\\
	\addressmark{2}%
	Department of Mathematics and Computer Science, University of Southern Denmark
}

\keywords{%
	Quicksort, dual-pivot, Yaroslavskiy's partitioning method,
	median of three, average case analysis%
}

\begin{document}

\maketitle

\begin{abstract}
	\noindent\textsf{\textbf{Abstract:\quad}}\small
	The new dual-pivot Quicksort by Vladimir Yaroslavskiy\,---\,used in Oracle's
	Java runtime library since version 7\,---\,features intriguing asymmetries in
	its behavior. They were shown to cause a basic variant of this algorithm to use
	less comparisons than classic single-pivot Quicksort implementations.
	In this paper, we extend the analysis to the case where the two pivots are
	chosen as fixed order statistics of a random sample and give the precise
	leading term of the average number of comparisons, swaps and executed Java
	Bytecode instructions. 
	It turns out that\,---\,unlike for classic Quicksort, where it is optimal to
	choose the pivot as median of the sample\,---\,the asymmetries in
	Yaroslavskiy's algorithm render pivots with a systematic skew more
	efficient than the symmetric choice.
	Moreover, the optimal skew heavily depends on the employed cost measure;
	most strikingly, abstract costs like the number of swaps and comparisons yield
	a very different result than counting Java Bytecode instructions,
	which can be assumed most closely related to actual running time. 
\end{abstract}

\section{Introduction}

Quicksort is one of the most efficient comparison-based sorting algorithms and
is thus widely used in practice, for example in the sort implementations
of the C++ standard library and Oracle's Java runtime library.
Almost all practical implementations are based on the highly tuned version of
\citet{Bentley1993}, often equipped with the strategy of
\citet{Musser1997} to avoid quadratic worst case behavior.
The Java runtime environment was no exception to this\,---\,up to version~6.
With version~7 released in~2009, however, Oracle broke with this tradition and
replaced its tried and tested implementation by a dual-pivot Quicksort
with a new partitioning method proposed by Vladimir Yaroslavskiy.

The decision was based on extensive running time experiments that clearly
favored the new algorithm.
This was particularly remarkable as earlier analyzed dual-pivot 
variants had not shown any potential for performance gains over classic
single-pivot Quicksort \citep{Sedgewick1975,hennequin1991analyse}.
However, we could show for pivots from fixed array positions (\ie\ no
sampling) that Yaroslavskiy's asymmetric partitioning method
beats classic Quicksort in the comparison model:
asymptotically $1.9\,n\ln n$ vs.\ $2\,n\ln n$ comparisons on
average \citep{Wild2012}.
As these savings are opposed by a large increase in the number of swaps, 
the overall competition still remained open. 
To settle it, we compared two Java implementations of the Quicksort
variants and found that Yaroslavskiy's method actually executes \emph{more} Java
Bytecode instructions on average \citep{Wild2013Quicksortarxiv}.
A possible explanation why it still shows better running times was recently
given by \citet{Kushagra2014}:
Yaroslavskiy's algorithm needs fewer scans over
the array than classic Quicksort, and is thus more efficient in the
\textsl{external memory model}.

Our analyses cited above ignore a very effective strategy in Quicksort:
for decades, practical implementations choose their pivots as \emph{median of a
random sample} of the input to be more efficient (both in terms of average
performance and in making worst cases less likely).
Oracle's Java~7 implementation also employs this optimization: 
it chooses its two pivots as the \emph{tertiles of five} sample elements.
This equidistant choice is a plausible generalization, since
selecting the pivot as median is known to be optimal for classic Quicksort
\citep{Sedgewick1975,Martinez2001}.

However, the classic partitioning methods treat elements smaller
and larger than the pivot in symmetric ways\,---\,unlike Yaroslavskiy's
partitioning algorithm:
depending on how elements relate to the two pivots, one of
\emph{five} different execution paths is taken in the partitioning loop, and
these can have highly different costs!
How often each of these five paths is taken depends on the \emph{ranks} of the
two pivots, which we can push in a certain direction by selecting \emph{other} order
statistics of a sample than the tertiles.
The partitioning costs alone are then minimized if the cheapest execution path
is taken all the time. 
This however leads to very unbalanced distributions of sizes
for the recursive calls, such that a \emph{trade-off} between partitioning costs
and balance of subproblem sizes results.

We have demonstrated experimentally that there is potential to tune dual-pivot
Quicksort using skewed pivots \citep{Wild2013Alenex}, but only considered a
small part of the parameter space.
\textbf{%
It will be the purpose of this paper to identify the optimal way to
sample pivots by means of a precise analysis of the resulting overall costs,
}
and to validate (and extend) the empirical findings that way.

\paragraph{Related work}

Single-pivot Quicksort with pivot sampling has been intensively studied over the
last decades
\citep{VanEmden1970,Sedgewick1975,Sedgewick1977,hennequin1991analyse,%
	Martinez2001,neininger2001multivariate,Chern2001a,Durand2003pseudonine}.
We heavily profit from the mathematical foundations laid by these authors.
There are scenarios where, even for the symmetric, classic Quicksort, a
skewed pivot can yield benefits over median of~$k$ \citep{Martinez2001,kaligosi2006branch}. 
An important difference to Yaroslavskiy's algorithm is, however, that the
situation remains symmetric:
a relative pivot rank $\alpha < \frac12$ has the same effect as
one with rank~$1-\alpha$.
For dual-pivot Quicksort with an \emph{arbitrary} partitioning method,
\citet{Aumuller2013icalp} establish a lower bound of asymptotically 
$1.8 \, n\ln n$ comparisons
and they also propose a partitioning method that attains this bound.

\paragraph{Outline}
After listing some general notation,
\wref{sec:generalized-yaroslavskiy-quicksort} introduces the subject of study.
\wref{sec:results} collects the main analytical results of this paper, whose 
proof is divided into 
\wref[Sections]{sec:distributional-analysis}, \ref{sec:expectations}
and~\ref{sec:solution-recurrence}.
Arguments in the main text are kept concise, but the interested reader is
provided with details in the appendix. 
The algorithmic consequences of our analysis are discussed in
\wref{sec:asymmetries-everywhere}. 
\wref{sec:conclusion} concludes the paper.

\section{Notation and Preliminaries}
\label{sec:notation}
\enlargethispage{\baselineskip}

We write vectors in bold font, for example 
$\vect t=(t_1,t_2,t_3)$.
For concise notation, we use expressions like $\vect t + 1$ to mean
\emph{element-wise} application, \ie, $\vect t + 1 = (t_1+1,t_2+1,t_3+1)$.
By $\dirichlet(\vect\alpha)$, we denote a random variable with \textsl{Dirichlet
distribution} and shape parameter
$\vect\alpha = (\alpha_1,\ldots,\alpha_d) \in \R_{>0}^d$.
Likewise for parameters $n\in\N$ and $\vect p =
(p_1,\ldots,p_d) \in [0,1]^d$ with $p_1+\cdots+p_d=1$, we write
$\multinomial(n,\vect p)$ for a random variable with \textsl{multinomial
distribution} with $n$ trials.
$\hypergeometric(k,r,n)$ is a random variable with
\textsl{hypergeometric distribution}, \ie, the number of red balls when
drawing $k$ times without replacement from an urn of $n\in\N$ balls,
$r$ of which are red, (where $k,r\in\{1,\ldots,n\}$).
Finally, $\uniform(a,b)$ is a random variable uniformly distributed in the
interval $(a,b)$, and $\bernoulli(p)$ is a Bernoulli variable with probability
$p$ to be $1$.
We use ``$\eqdist$'' to denote equality in distribution.

As usual for the average case analysis of sorting algorithms, we assume the
\textsl{random permutation model}, \ie, all elements are
different and every ordering of them is equally likely.
The input is given as array $\arrayA$ of length $n$ and we denote the initial
entries of $\arrayA$ by $U_1,\ldots,U_n$.
We further assume that $U_1,\ldots,U_n$ are
i.\,i.\,d.\ uniformly $\uniform(0,1)$ distributed;
as their ordering forms a random permutation \citep{mahmoud2000sorting}, this
assumption is without loss of generality. 
Some further notation specific to our analysis is introduced below; for
reference, we summarize all notations used in this paper in \wref{app:notations}.

\section{Generalized Yaroslavskiy Quicksort}
\label{sec:generalized-yaroslavskiy-quicksort}

In this section, we review Yaroslavskiy's partitioning method and combine it
with the pivot sampling optimization to obtain what we call the
\textsl{Generalized Yaroslavskiy Quicksort} algorithm.
We leave some parts of the algorithm unspecified here,
but give a full-detail implementation in the appendix. 
The reason is that \emph{preservation of randomness} is somewhat tricky to
achieve in presence of pivot sampling, but vital for precise analysis.
The casual reader might content him- or herself with our promise that
everything turns out alright in the end; the interested reader is invited
to follow our discussion of this issue in \wref{app:algorithms}.

\subsection{Generalized Pivot Sampling}
\label{sec:general-pivot-sampling}

Our pivot selection process is declaratively specified as
follows, where $\vect t = (t_1,t_2,t_3) \in \N^3$ is a fixed parameter:
choose a random sample $\vect V = (V_1,\ldots,V_k)$ of size 
$k = k(\vect t) \ce t_1+t_2+t_3+2$ from the elements
and denote by
$(V_{(1)},\ldots,V_{(k)})$ the \emph{sorted}\hspace{1pt}%
\footnote{%
	In case of equal elements any possible ordering will do. 
	However in this paper, we assume distinct elements. 
}
sample,
\ie,
$
	V_{(1)} \le V_{(2)} \le \cdots \le V_{(k)}
$.
Then choose the two pivots $P \ce V_{(t_1+1)}$ and $Q\ce V_{(t_1+t_2+2)}$ such
that they divide the sorted sample into three regions of respective sizes $t_1$,
$t_2$ and~$t_3$:
\begin{multline}
	\underbrace{V_{(1)} \ldots V_{(t_{1})}}
			_{t_{1}\,\mathrm{elements}}
	\wrel\le
	\underbrace{V_{(t_{1}+1)}} _ {=P}
	\wrel\le
	\underbrace{V_{(t_{1}+2)} \ldots V_{(t_{1}+t_{2}+1)}}
			_{t_{2}\,\mathrm{elements}}
	\wrel\le
	\underbrace{V_{(t_{1}+t_{2}+2)}} _ {=Q}
	\wrel\le
	\underbrace{V_{(t_{1}+t_{2}+3)} \ldots V_{(k)}}
			_{t_{3}\,\mathrm{elements}}
	\;.
\end{multline}
Note that by definition, $P$ is the small(er) pivot and $Q$ is the
large(r) one.
We refer to the $k-2$ elements of the sample that are not chosen as
pivot as \emph{``sampled-out''};
$P$~and $Q$ are the chosen \emph{pivots}. 
All other elements\,---\,those which have not been part of the sample\,---\,are
referred to as \emph{ordinary} elements. 
Pivots and ordinary elements together form the set of \emph{partitioning
elements},
(because we exclude sampled-out elements from partitioning).

\subsection{Yaroslavskiy's Dual Partitioning Method}
\label{sec:yaroslavskiys-partitioning-method}
\enlargethispage{2\baselineskip}

In bird's-eye view, Yaroslavskiy's partitioning method consists of two
indices, $k$ and $g$, that start at the left resp.\ right end of \arrayA and
scan the array until they meet. 
Elements left of $k$ are smaller or equal than $Q$, elements right of $g$ are
larger.
Additionally, a third index $\ell$ lags behind $k$ and separates elements
smaller than $P$ from those between both
pivots.
Graphically speaking, the invariant of the algorithm is as follows:

\smallskip
\begin{quote}
	\mbox{}\hfill%
	\begin{tikzpicture}[
		yscale=0.5, xscale=0.6,
		baseline=(ref.south),
		every node/.style={font={}},
		semithick,
	]	
	
	\draw (-.75,0) -- ++(14.5,0) -- ++(0,1) -- ++(-14.5,0) -- cycle;
	\node at (-.375, .5) {$P$} ;
	\node at (13.375, .5) {$Q$} ;
	\draw (0,0) -- ++(0,1);
	\draw (13,0) -- ++(0,1);
	
	\node at (1.5,0.5) {$< P$};
	\draw (3,1) -- ++ (0,-1);
	\node at (3.3,-0.4) {$\ell$};
	
	\node at (12,0.5) {$\ge Q$};
	\draw (11,1) -- ++ (0,-1);
	\node at (10.7,-0.4) {$g$};
	
	\node at (5,0.5) {$P\le \circ\le Q$};
	\draw (7,1) -- ++(0,-1);
	\node at (7.3,-0.4) {$k$};

	\node[below] at (10.7,-0.6) {$\leftarrow$};
	\node[below] at (3.3,-0.6) {$\rightarrow$};
	\node[below] at (7.3,-0.6) {$\rightarrow$};
	
	\node[inner sep=0pt] (ref) at (9,0.5) {?};
	\end{tikzpicture}%
	\hfill\mbox{}
\end{quote}

\noindent
We write~$\positionsets{K}$ and~$\positionsets{G}$ for the sets of all indices
that $k$ resp.~$g$ attain in the course of the partitioning process.
Moreover, we call an element \emph{small}, \emph{medium}, or \emph{large} if
it is smaller than~$P$, between $P$ and $Q$, or larger than~$Q$, respectively.
The following properties of the algorithm are needed for the analysis, 
(see \citet{Wild2012,Wild2013Quicksortarxiv} for details):
\begin{enumerate}[label=(Y\arabic*)]
	\item \label{prop:in-K-first-with-p} 
		Elements $U_i$, $i \in \positionsets{K}$, are first compared with $P$.
		Only if $U_i$ is not small, it is also compared to $Q$.
	\item \label{prop:in-G-first-with-q} 
		Elements $U_i$, $i \in \positionsets{G}$, are first compared with~$Q$.
		If they are not large, they are also compared to~$P$.
	\item \label{prop:small-eventually-swappel-left} 
		Every small element eventually causes one swap to put it behind $\ell$.
	\item \label{prop:latK-smatG-swapped-in-pairs}
		The large elements located in $\positionsets{K}$ and the non-large
		elements in $\positionsets{G}$ are always swapped in pairs.
\end{enumerate}
For the number of comparisons we will thus need to
count the large elements $U_i$ with $i \in \positionsets{K}$; we abbreviate
their number by ``$\latK$''.
Similarly, $\satK$ and $\satG$ denote the number of small elements in $k$'s
resp.\ $g$'s range.

When partitioning is finished, $k$ and $g$ have met and thus $\ell$ and $g$
divide the array into three ranges, containing the small, medium resp.\
large (ordinary) elements, which are then sorted recursively.
For subarrays with at most \isthreshold elements, we switch to Insertionsort,
(where \isthreshold is constant and at least $k$).
The resulting algorithm, Generalized Yaroslavskiy Quicksort with pivot
sampling parameter $\vect t=(t_1,t_2,t_3)$ and Insertionsort threshold
\isthreshold, is henceforth called $\generalYarostM$.

\section{Results}
\label{sec:results}

For $\vect t \in \N^3$ and $\harm{n}$ the $n$th harmonic number, we define the
\textsl{discrete entropy} $\discreteEntropy$ of $\vect t$ as
\begin{align}
\label{eq:discrete-entropy}
		\discreteEntropy[\vect t]
	&\wwrel=
		\sum_{l=1}^3 \frac{t_l+1}{k+1} 
				(\harm{k+1} - \harm{t_l+1})
	\;. 
\end{align} 
The name is justified by the following connection between \discreteEntropy and
the \weakemph{entropy function} $\contentropy[]$ of information theory: 
for the sake of analysis, let $k\to\infty$, such that  
ratios ${t_l}/k$ converge to constants~$\tau_l$. Then
\begin{align}
\label{eq:limit-g-entropy}
		\discreteEntropy
	&\wwrel\sim
		- \sum_{l=1}^3 \tau_l \bigl( \ln(t_l+1) - \ln(k+1) \bigr)
	\wwrel\sim	  - \sum_{l=1}^3 \tau_l \ln(\tau_l)
	\wwrel{\equalscolon} \contentropy
	\;.
\end{align}
The first step follows from the asymptotic equivalence
$\harm{n} \sim \ln(n)$ as $n\to\infty$. 
\wref{eq:limit-g-entropy} shows that for large~$\vect t$, the maximum of
$\discreteEntropy$ is attained for 
$\tau_1 = \tau_2 = \tau_3 = \frac13$.
Now we state our main result:

\begin{theorem}[Main theorem]
\label{thm:expected-costs}
	Generalized Yaroslavskiy Quicksort with pivot sampling
	parameter $\vect t = (t_1,t_2,t_3)$ performs on average
	$ C_n \sim \frac{a_C}{\discreteEntropy} \, n\ln n$
	comparisons and
	$ S_n \sim \frac{a_S}{\discreteEntropy} \, n\ln n$
	swaps to sort a random permutation of $n$ elements, where 
	\begin{align*}
			a_C
		&\wrel=
				  1 + \frac{t_2+1}{k+1}
				    + \frac{(2t_1 + t_2 + 3)(t_3+1)}{(k+1)(k+2)}
	\quad\text{and}\quad
			a_S
		\wrel=
				  \frac{t_1+1}{k+1}
				+ \frac{(t_1 + t_2 + 2)(t_3 + 1)}{(k+1)(k+2)} \,.
	\end{align*}
	Moreover, if the partitioning loop is implemented as in Appendix~C 
	of~\citep{Wild2013Quicksortarxiv}, it executes on average 
	$\bytecodes_{\!n} \sim \frac{a_{\bytecodes}}{\discreteEntropy} \, n\ln n$
	Java Bytecode instructions to sort a random permutation of size $n$ with
	\begin{align*}
			a_\bytecodes
		&\wwrel=
				  		10
				\bin+ 	13 \frac{t_1+1}{k+1}
				\bin+  	 5 \frac{t_2+1}{k+1}
				\bin+ 	11 \frac{(t_1+t_2 + 2)(t_3+1)}{(k+1)(k+2)}
				\bin+ 	   \frac{(t_1+1)(t_1 + t_2 + 3)}{(k+1)(k+2)}
			\;.
	\end{align*}
\end{theorem}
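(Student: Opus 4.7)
The plan is to follow the classical divide-and-conquer template for Quicksort average-case analyses, adapted to the three-way split and the asymmetric cost structure of $\generalYarostM$. For each cost measure $X \in \{C, S, \bytecodes\}$, I would derive a full-history recurrence
\[
    X_n \wrel= T_n^X \bin+ \E\bigl[X_{J_1} + X_{J_2} + X_{J_3}\bigr]
    \qquad (n > \isthreshold) \,,
\]
where $(J_1,J_2,J_3)$ are the sizes of the three subproblems produced by one partitioning step, $T_n^X$ is the expected partitioning toll, and Insertionsort absorbs the base case $n\le\isthreshold$. The remaining work then splits naturally into (i)~evaluating $T_n^X$ and (ii)~solving the recurrence.

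For step~(i) I would condition on the pivot values $(P,Q)=(p,q)$. Under the random permutation model, the $n-k$ ordinary elements are conditionally i.i.d.\ $\uniform(0,1)$, so their joint count of small, medium, and large elements is $\multinomial(n-k,(p,\,q-p,\,1-q))$. Properties~(Y1)--(Y4) let me express $T_n^X$ as a linear combination of (a)~$n$ (the ``first'' pivot comparison per ordinary element), (b)~the conditional expectations $\E[\satK\given p,q]$, $\E[\latK\given p,q]$, $\E[\satG\given p,q]$ and the analogous execution-path counters used by $\bytecodes$, and (c)~deterministic constants in $\vect t$ coming from the sample. The key input, imported from \citep{Wild2012,Wild2013Quicksortarxiv}, is a closed form for those conditional counts as bilinear expressions in $p$, $q-p$, $1-q$; its derivation relies on the observation that, conditionally on the pivots, the segments scanned by $k$ and $g$ have multinomial structure and that a randomness-preserving implementation (see the appendix) keeps the ordinary elements i.i.d.\ uniform. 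Since $(P,Q-P,1-Q)$ is the vector of spacings around the order statistics $V_{(t_1+1)}$ and $V_{(t_1+t_2+2)}$ of a uniform sample of size~$k$, one has $(P,Q-P,1-Q)\eqdist\dirichlet(t_1+1,\,t_2+1,\,t_3+1)$, and applying standard Dirichlet moment formulas to every term of $\E[T_n^X\given p,q]$ yields $\E[T_n^X] = a_X\cdot n + O(1)$ with the coefficients $a_X$ stated in the theorem; the three-term expression for $a_\bytecodes$ reflects the five execution paths of Yaroslavskiy's loop, each weighted by its Bytecode tariff and by its probability.

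For step~(ii) I would solve the recurrence. Conditional on the pivots, $(J_1,J_2,J_3)\sim\multinomial(n-k,(P,Q-P,1-Q))$; marginally each $J_l$ is Beta-binomial with $\E[J_l]=\frac{t_l+1}{k+1}(n-k)$. Substituting the ansatz $X_n = c_X\, n\,\harm{n} + O(n)$ and using the standard finite-sample identity for the expected harmonic number of a Beta-binomial, the $n\harm{n}$-coefficients on both sides of the recurrence balance exactly when $c_X\cdot\discreteEntropy = a_X$, which gives $X_n \sim (a_X/\discreteEntropy)\,n\ln n$ via $\harm{n}\sim\ln n$. Equivalently one can invoke a continuous master theorem for recurrences with Dirichlet-distributed splits applied to the shape function $\dirichlet(t_1+1,t_2+1,t_3+1)$, and then refine the constant $\contentropy$ to the exact $\discreteEntropy$ using the same harmonic-number identity; the insertion-sort threshold $\isthreshold$ contributes only to lower-order terms.

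The main obstacle is step~(i): exactly computing $\E[T_n^X\given p,q]$. The index sets $\positionsets{K}$ and $\positionsets{G}$ are themselves random even after conditioning on~$(p,q)$, so the counters $\satK$, $\latK$, $\satG$ and the bytecode path-indicators must be handled as sums of correlated indicators whose joint law is dictated by Yaroslavskiy's interleaved scan\,---\,this is where the bulk of the bookkeeping (and the asymmetry between the coefficients $a_C$, $a_S$, $a_\bytecodes$) arises. Carrying this through rigorously also requires verifying that the pivot-sampling implementation preserves the uniform-i.i.d.\ structure of the ordinary elements (the ``preservation of randomness'' subtlety flagged in Section~\ref{sec:generalized-yaroslavskiy-quicksort}), so that each recursive call again faces a random permutation and the same toll formula applies unchanged throughout the recursion tree.
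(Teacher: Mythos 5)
Your proposal is correct and follows essentially the same route as the paper: the same distributional recurrence (with the randomness-preservation caveat handled in the appendix), the same decoupling of pivot values as $\dirichlet(\vect t + 1)$ spacings with multinomially distributed partition sizes, Dirichlet/multinomial moment computations giving $\E[T_n^X] = a_X\, n + \Oh(1)$, and a continuous-master-theorem (or equivalent harmonic-sum balancing) argument yielding $a_X/\discreteEntropy \cdot n\ln n$. The only notable difference is bookkeeping: where you import the conditional path counts from the earlier no-sampling analyses, the paper re-derives them by showing that, conditional on the partition sizes $\vect I$, the counters $\latK$ and $\satG$ are hypergeometric (plus a Bernoulli correction for the overshoot $\delta$), and it makes the recurrence solution rigorous via Roura's CMT with an explicit shape function rather than an ansatz balance.
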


\noindent
The following sections are devoted to the proof of \wref{thm:expected-costs}.
\wref{sec:distributional-analysis} sets up a recurrence of costs
and characterizes the distribution of costs of one partitioning step.
The expected values of the latter are computed in \wref{sec:expectations}.
Finally, \wref{sec:solution-recurrence} provides a generic solution to the
recurrence of the expected costs; in combination with the expected partitioning
costs, this concludes our proof.

\section{Distributional Analysis}
\label{sec:distributional-analysis}

\subsection{Recurrence Equations of Costs}
\label{sec:recurrence-quicksort}

Let us denote by $C_n$ the \emph{costs} of \generalYarostM on a random
permutation of size~$n$\,---\,where different ``cost measures'', like the number
of comparisons, will take the place of $C_n$ later. 
$C_n$~is a non-negative \emph{random} variable whose distribution
depends on~$n$.
The total costs decompose into those for the first partitioning step plus the
costs for recursively solving subproblems.
As Yaroslavskiy's partitioning method preserves randomness (see
\wref{app:algorithms}), we can express the total costs
$C_n$ recursively in terms of the same cost function with smaller arguments:
for sizes $J_1$, $J_2$ and $J_3$ of the three subproblems, the costs of
corresponding recursive calls are distributed like $C_{J_1}$, $C_{J_2}$
and~$C_{J_3}$, and conditioned on $\vect J = (J_1,J_2,J_3)$, these
random variables are independent.
Note, however, that the subproblem sizes are
themselves random and inter-dependent.
Denoting by $T_n$ the costs of the first partitioning step, 
we obtain the following \emph{distributional recurrence} for the family
$(C_n)_{n\in\N}$ of random variables:
\begin{align}
\label{eq:distributional-recurrence}
	C_n &\wwrel\eqdist \begin{cases}
			T_n \wbin+ C_{J_1} + C'_{J_2} + C''_{J_3} ,
			& \text{for } n > \isthreshold ; \\
		\iscost_n ,
			& \text{for } n \le \isthreshold .
	\end{cases}
\end{align}
Here $\iscost_n$ denotes the cost of
{Insertionsort}\kern.5pt ing a random permutation of size $n$.
$(C'_j)_{j\in\N}$ and $(C''_j)_{j\in\N}$ are independent copies of
\smash{$(C_j)_{j\in\N}$}, \ie, for all $j$, the variables $C_j$, $C'_j$ and
$C''_j$ are identically distributed and for all \smash{$\vect j \in \N^3$}, 
$C_{j_1}$, $C'_{j_2}$ and $C''_{j_3}$ are totally independent, and they are also
independent of $T_n$. 
We call $T_n$ the \emph{toll function} of the recurrence, as it quantifies
the ``toll'' we have to pay for unfolding the recurrence once.
Different cost measures only differ in the toll functions, such that we can
treat them all in a uniform fashion by studying \wref{eq:distributional-recurrence}.
Taking expectations on both sides, we
find a recurrence equation for the \emph{expected} costs $\E[C_n]$:
\begin{align}
\label{eq:ECn-recurrence}
		\E[C_n]
	&\wwrel= \begin{cases}
		\displaystyle
		\E[T_n] 
		\wbin+ 
		\sum_{\mathclap{\substack{\vect j=(j_1,j_2,j_3) \\ j_1+j_2+j_3=n-2}}}
			\Prob(\vect J = \vect j)
			\bigl(
				\E[C_{j_1}] + \E[C_{j_2}] + \E[C_{j_3}]
			\bigr),
		& \text{for } n > \isthreshold  ; \\[5ex]
		\E[\iscost_n] ,
		& \text{for } n \le \isthreshold .
	\end{cases}
\end{align}
A simple combinatorial argument gives access to $\Prob(\vect J = \vect j)$, 
the probability of $\vect J = \vect j$:
of the $\binom nk$ different size $k$ samples of $n$
elements, those contribute to the probability of $\{\vect J =
\vect j\}$, in which exactly $t_1$ of the sample elements are chosen from 
the overall $j_1$ small elements; and likewise $t_2$ of the $j_2$ medium
elements and $t_3$ of the $j_3$ large ones are contained in the sample.
We thus have

\begin{align}
		\Prob(\vect J = \vect j)
	&\wwrel=
		\binom{j_1}{t_1} \binom{j_2}{t_2} \binom{j_3}{t_3}
		\bigg/
		\binom nk
		\;.
\label{eq:prob-for-J-equals-j}
\end{align}

\needspace{5\baselineskip}
\subsection{Distribution of Partitioning Costs}

Let us denote by $I_1$, $I_2$ and $I_3$ the number of small, medium and large
elements among the ordinary elements, (\ie, $I_1+I_2+I_3 =
n-k$)\,---\,or equivalently stated, $\vect I = (I_1,I_2,I_3)$ is the (vector
of) sizes of the three partitions (excluding sampled-out elements).
Moreover, we define the indicator variable 
$\delta = \indicator{U_\chi \rel> Q}$ to
account for an idiosyncrasy of Yaroslavskiy's algorithm (see the proof of
\wref{lem:distribution-partitioning-comparisons}), 
where $\chi$ is the point where indices $k$ and $g$ first meet.
As we will see, we can characterize the distribution of partitioning costs
\emph{conditional} on $\vect I$, \ie, when considering $\vect I$ \emph{fixed}.

\subsubsection{Comparisons}
For constant size samples, only the comparisons during the partitioning process
contribute to the linearithmic leading term of the asymptotic average costs, as
the number of partitioning steps remains linear.
We can therefore ignore comparisons needed for sorting the sample.
As \isthreshold is constant, the same is true for subproblems of size at
most \isthreshold that are sorted with Insertionsort.
It remains to count the comparisons during the first partitioning step, where
contributions that are uniformly bounded by a constant can likewise be
ignored.

\begin{lemma}
\label{lem:distribution-partitioning-comparisons}
	Conditional on the partition sizes $\vect I$, the number of comparisons
	$\toll C = \toll [n] C$ in the first partitioning step of \generalYarostM on a
	random permutation of size $n>\isthreshold$ fulfills
	\begin{align}
	\label{eq:distribution-partitioning-comparisons-exact}
			\toll [n] C
		&\wrel=
			(n-k) 
			\bin+ I_2 
			\bin+ (\latK) 
			\bin+ (\satG)
			\bin+ 2\delta
	\\	&\wrel\eqdist
	\label{eq:distribution-partitioning-comparisons-dist}
			(n-k) 
			\bin+ I_2 
			\bin+ \hypergeometric(I_1+I_2,I_3,n-k)
			\bin+ \hypergeometric(I_3,I_1,n-k)
			\bin+ 3\bernoulli\bigl(\tfrac{I_3}{n-k}\bigr)
		\;.
	\end{align} 
\end{lemma}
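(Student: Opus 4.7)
My plan is to derive the exact identity \wref{eq:distribution-partitioning-comparisons-exact} by classifying comparisons according to the execution paths in Yaroslavskiy's partitioning loop, and then to obtain the distributional identity \wref{eq:distribution-partitioning-comparisons-dist} by exploiting that, conditional on $\vect I$, the ordinary elements form a uniformly random arrangement of $I_1$ small, $I_2$ medium and $I_3$ large values over the $n-k$ non-sample positions.

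For \wref{eq:distribution-partitioning-comparisons-exact}, I would use properties \ref{prop:in-K-first-with-p} and \ref{prop:in-G-first-with-q}: every ordinary element is compared at least once (to $P$ if its position lies in $\positionsets{K}$, to $Q$ if it lies in $\positionsets{G}$), contributing the base term $n-k$. A second comparison happens exactly when the first test fails, i.e.\ for medium or large elements in $\positionsets{K}$ and for medium or small elements in $\positionsets{G}$. Grouping by type and noting that the medium elements contribute $I_2$ regardless of side yields the summands $I_2 + (\latK) + (\satG)$. The remaining $2\delta$ comes from the element $U_\chi$ at the meeting point of $k$ and $g$: an inspection of the loop termination, following \citep{Wild2012,Wild2013Quicksortarxiv}, shows that this position is charged two additional comparisons precisely when $U_\chi > Q$, which is the event defining $\delta$.

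For \wref{eq:distribution-partitioning-comparisons-dist}, I would first condition also on $\delta$. Given $\vect I$ and $\delta$, the sizes $|\positionsets{K}|$ and $|\positionsets{G}|$ are deterministic (they shift by one depending on $\delta$), and the assignment of element types to positions in $\positionsets{K}\cup\positionsets{G}$ is uniformly random; hence $\latK$ and $\satG$ are hypergeometric with the prescribed urn sizes, with parameters depending on $\delta$. Since $U_\chi$ is a uniformly selected ordinary element, $\delta \eqdist \bernoulli(I_3/(n-k))$. The last step is to verify that the combined effect of the $\delta$-dependent shift in the hypergeometric parameters and the explicit $2\delta$ summand reduces to adding a single $\bernoulli(I_3/(n-k))$ variable with coefficient $3$, so that the hypergeometrics in \wref{eq:distribution-partitioning-comparisons-dist} can be stated with parameters independent of~$\delta$.

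The main obstacle will be this bookkeeping around $\chi$ and $\delta$: making precise the claim that $|\positionsets{K}|=I_1+I_2+\delta$ (or the analogous expression dictated by the code) and checking that the sum of the $\delta$-dependent offsets in $(\latK)+(\satG)+2\delta$ is exactly $3\delta$, so that the aggregated Bernoulli term has the clean coefficient $3$. I would do this by tracing the final iteration of the partitioning loop explicitly, using the detailed pseudocode in \wref{app:algorithms}; the remaining distributional statements follow from standard facts about sampling without replacement from a random permutation.
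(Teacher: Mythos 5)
Your derivation of the exact count \wref{eq:distribution-partitioning-comparisons-exact} matches the paper's argument step for step: $n-k$ first comparisons for the ordinary elements, $I_2$ second comparisons for the medium ones, $\latK$ and $\satG$ via \ref{prop:in-K-first-with-p} and \ref{prop:in-G-first-with-q}, and $2\delta$ for the overshoot at the meeting point $\chi$. That half is fine.

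The distributional identity \wref{eq:distribution-partitioning-comparisons-dist} is where your proposal has a genuine gap, and it sits exactly at the step you defer as ``bookkeeping''. First, your premise that, conditional on $\vect I$ \emph{and} $\delta$, the assignment of types to the positions in $\positionsets{K}\cup\positionsets{G}$ is uniformly random is not correct as stated: conditioning on $\delta$ fixes the type of $U_\chi$, so given $\delta=1$ only $I_3-1$ large elements remain among the other $n-k-1$ ordinary elements, and the resulting conditional laws of $\latK$ and $\satG$ are \emph{not} plain hypergeometrics with merely ``shifted'' parameters. Second, the reduction to hypergeometrics whose parameters are free of $\delta$ plus a single $3\,\bernoulli\bigl(\tfrac{I_3}{n-k}\bigr)$ term is precisely the content of the paper's key observation, which your plan does not supply: one has $|\positionsets{G}|=I_3$ and $|\positionsets{K}|=I_1+I_2+\delta$, and the \emph{additional} element in $k$'s range in the case $\delta=1$ is $U_\chi$ itself, which is large by the very definition of $\delta$; hence it contributes a deterministic extra unit, giving $\latK \eqdist \hypergeometric(I_1+I_2,I_3,n-k)+\delta$ (an additive $\delta$, not a change of urn parameters), while $\satG \eqdist \hypergeometric(I_3,I_1,n-k)$ by the urn argument over $g$'s range. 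Summing, the $\delta$-contributions aggregate as $\delta+2\delta=3\delta$ with $\delta \eqdist \bernoulli\bigl(\tfrac{I_3}{n-k}\bigr)$, which is the stated $3\,\bernoulli\bigl(\tfrac{I_3}{n-k}\bigr)$. Without this observation (or an equivalent explicit trace of the final loop iteration, as in Lemma~3.7 of the cited work of Wild et al.), the passage from your $\delta$-conditioned hypergeometrics to the clean form of \wref{eq:distribution-partitioning-comparisons-dist} remains unproven, so as written the second display is asserted rather than established.
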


\begin{proof}
Every ordinary element is compared to at least one of the pivots,
which makes $n-k$ comparisons.
Additionally, for all medium elements, the second comparison is inevitably
needed to recognize them as ``medium'', and there are $I_2$ such elements.
Large elements only cause a second comparison if they are first compared
with~$P$, which happens if and only if they are located in $k$'s range,
see~\ref{prop:in-K-first-with-p}.
We abbreviated the (random) number of large elements in $\positionsets K$ as
$\latK$.
Similarly, $\satG$ counts the second comparison for all small elements found in
$g$'s range, see \ref{prop:in-G-first-with-q}.

The last summand $2\delta$ accounts for a technicality in Yaroslavskiy's
algorithm. 
If $U_\chi$, the element where $k$ and $g$ meet, is
large, then index $k$ overshoots $g$ by one, which causes two additional
(superfluous) comparisons with this element.
$\delta = \indicator{U_\chi \rel> Q}$ is the indicator
variable of this event.
This proves~\wref{eq:distribution-partitioning-comparisons-exact}.

For the equality in distribution, recall that $I_1$, $I_2$ and $I_3$ are the
number of small, medium and large elements, respectively. 
Then we need the cardinalities of $\positionsets{K}$ and $\positionsets{G}$.
Since the elements right of $g$ after partitioning are exactly all large
elements, we have
$
		|\positionsets{G}|
	=
		I_3
$ 
and
$
		|\positionsets{K}|
	=
		I_1+I_2+\delta
$;
(again, $\delta$ accounts for the overshoot, see \citet{Wild2013Quicksortarxiv}
for detailed arguments).
The distribution of $\satG$, conditional on~$\vect I$, is now given by the
following urn model:
we put all $n-k$ ordinary elements in an urn and draw their
positions in \arrayA.
$I_1$ of the elements are colored red (namely the small ones), the rest is black
(non-small).
Now we draw the $|\positionsets{G}| = I_3$ elements in $g$'s range from
the urn without replacement. 
Then $\satG$ is exactly the number of red (small)
elements drawn and thus 
$\satG \rel\eqdist \hypergeometric(I_3, I_1, n-k)$.

The arguments for $\latK$ are similar, however the additional $\delta$
in $|\positionsets{K}|$ needs special care.
As shown in the proof of Lemma~3.7 of \citet{Wild2013Quicksortarxiv}, 
the additional element in $k$'s range for the case $\delta=1$ is
$U_\chi$, which then is large by definition of~$\delta$.
It thus simply contributes as additional summand:
$\latK \rel\eqdist \hypergeometric(I_1+I_2, I_3, n-k) + \delta$.
Finally, the distribution of $\delta$ is Bernoulli
$\bernoulli\bigl(\tfrac{I_3}{n-k}\bigr)$, since conditional on $\vect I$, the
probability of an ordinary element to be large is $I_3 / (n-k)$.
\end{proof}

\subsubsection{Swaps}
As for comparisons, only the swaps in the partitioning step contribute to the
leading term asymptotics.

\begin{lemma}
\label{lem:distribution-partitioning-swaps}
	Conditional on the partition sizes $\vect I$, the number of swaps
	$\toll S = \toll [n] S$ in the first partitioning step of \generalYarostM on a
	random permutation of size $n>\isthreshold$ fulfills
	\begin{align*}
			\toll [n] S
		&\wwrel=
			 I_1 
			\bin+ (\latK) 
		\wwrel{\rel{\eqdist}}
			I_1 
			\bin+ \hypergeometric(I_1+I_2, I_3, n-k) 
			\bin+ \bernoulli\bigl(\tfrac{I_3}{n-k}\bigr)
		\;.
	\end{align*} 
\end{lemma}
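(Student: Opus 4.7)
The plan is to follow the blueprint of the proof of \wref{lem:distribution-partitioning-comparisons} and directly exploit the invariants \ref{prop:small-eventually-swappel-left} and \ref{prop:latK-smatG-swapped-in-pairs} of Yaroslavskiy's partitioning method. The goal is to identify \emph{which} events during one partitioning step incur a swap, sum their contributions to get the exact expression, and then reuse the urn arguments already developed for \wref{lem:distribution-partitioning-comparisons} to convert the summands into the stated hypergeometric and Bernoulli variables.

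First I would establish the exact identity $\toll[n]S = I_1 + (\latK)$. By \ref{prop:small-eventually-swappel-left}, every one of the $I_1$ small ordinary elements contributes exactly one swap (to be moved behind the $\ell$-pointer), and these swaps are disjoint from any other swap performed by the algorithm. By \ref{prop:latK-smatG-swapped-in-pairs}, all remaining swaps occur in pairs that exchange a large element in $\positionsets K$ with a non-large element in $\positionsets G$; since each such pair is counted once per large element in $k$'s range, the remaining contribution is precisely $\latK$. Summing gives the first equality. (Swaps performed while sorting the sample, and those performed during Insertionsort on small subproblems, do not appear in $\toll[n]S$ by definition, since $\toll[n]S$ counts only the first partitioning step.)

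Next I would translate $\latK$ into its distributional form. Here I can cite the derivation already used in the proof of \wref{lem:distribution-partitioning-comparisons}: conditional on $\vect I$, the positions in $\positionsets K$ form a uniformly random size-$(I_1+I_2+\delta)$ subset of the $n-k$ ordinary positions, so the number of large elements among them is hypergeometric with parameters $(I_1+I_2,\,I_3,\,n-k)$, plus the deterministic contribution of the boundary element $U_\chi$ whenever $\delta=1$ (which is large precisely in that case). Therefore $\latK \eqdist \hypergeometric(I_1+I_2,I_3,n-k) + \delta$, and, as established there, $\delta \eqdist \bernoulli(I_3/(n-k))$ conditional on $\vect I$. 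Substituting into $\toll[n]S = I_1 + (\latK)$ yields the claimed distributional identity.

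The only subtle point is the $\delta$ correction, but it is handled verbatim as in \wref{lem:distribution-partitioning-comparisons}; after that, the lemma follows from invariants \ref{prop:small-eventually-swappel-left} and \ref{prop:latK-smatG-swapped-in-pairs} without further calculation.
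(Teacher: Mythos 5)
Your proposal is correct and follows essentially the same route as the paper: it decomposes the swaps into $I_1$ via \ref{prop:small-eventually-swappel-left} and $\latK$ via \ref{prop:latK-smatG-swapped-in-pairs}, and then reuses the urn argument and the $\delta$-correction from the proof of \wref{lem:distribution-partitioning-comparisons} to obtain $\latK \eqdist \hypergeometric(I_1+I_2,I_3,n-k)+\bernoulli\bigl(\tfrac{I_3}{n-k}\bigr)$, exactly as the paper does.
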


\begin{proof}
No matter where a small element is located initially, it will eventually incur
one swap that puts it at its final place (for this partitioning step) to the
left of $\ell$, see \ref{prop:small-eventually-swappel-left};
this gives a contribution of $I_1$.
The remaining swaps come from the ``crossing pointer'' scheme, where $k$ stops
on the first large and $g$ on the first non-large element, which are then
exchanged in one swap \ref{prop:latK-smatG-swapped-in-pairs}.
For their contribution, it thus suffices to count the large elements in $k$'s
range, that is $\latK$.
The distribution of $\latK$ has already been discussed in the proof of
\wref{lem:distribution-partitioning-comparisons}.
\end{proof}

\subsubsection{Bytecode Instructions}

A closer investigation of the partitioning method reveals the number
of executions for every single Bytecode instruction in the algorithm.
Details are omitted here; the analysis is very similar to the case without
pivot sampling that is presented in detail in \citep{Wild2013Quicksortarxiv}.

\begin{lemma}
\label{lem:distribution-partitioning-bytecodes}
	Conditional on the partition sizes $\vect I$, the number of executed Java
	Bytecode instructions $\toll \bytecodes = \toll [n] \bytecodes$ of the first
	partitioning step of \generalYarostM{}\,---\,implemented as in Appendix~C of
	\citep{Wild2013Quicksortarxiv}\,---\,fulfills
	on a random permutation of size	$n>\isthreshold$
	\vspace{-1ex}
	\begin{small}%
	\begin{multline*}
			\toll[n]\bytecodes
		\wrel{\rel{\eqdist}}		
					10 n + 13 I_1 + 5 I_2
			+ 		11\, \hypergeometric(I_1+I_2,I_3,n-k)
			+		   \hypergeometric(I_1,I_1+I_2,n-k)
			\wbin+	\Oh(1) \;.
	\end{multline*}
	\end{small}%
\qed\end{lemma}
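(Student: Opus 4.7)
My approach would mirror the arguments for comparisons and swaps in \wref{lem:distribution-partitioning-comparisons} and \wref{lem:distribution-partitioning-swaps}, but carried out at the finer granularity of single Bytecode instructions. The plan is to walk through the partitioning pseudocode of Appendix~C in \citep{Wild2013Quicksortarxiv} and assign each basic block a fixed Bytecode cost, then sum over the number of times each block is executed during one partitioning call.

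First I would isolate the ``fixed'' cost per ordinary element. Every ordinary element is visited by pointer $k$ or $g$ exactly once; on that visit it incurs the same loop-overhead, index update and first pivot comparison. Collecting those common Bytecodes yields the $10 n$ term, since the sampled-out elements contribute only an additive $\Oh(1)$ as $k=\Oh(1)$, and loop initialization (setting up $k$, $g$ and $\ell$, boundary checks, and handling the meeting point $U_\chi$) is similarly~$\Oh(1)$.

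Next, I would apportion the ``conditional'' Bytecodes according to Yaroslavskiy's five execution paths, classified by whether the element is small, medium or large and by which pointer encounters it. Using properties~\ref{prop:in-K-first-with-p}--\ref{prop:latK-smatG-swapped-in-pairs}, the count becomes: small elements always trigger a swap into $\ell$'s range, contributing $13\,I_1$; medium elements require a second pivot comparison followed by a pointer advance, contributing $5\,I_2$; large elements in $\positionsets{K}$ additionally trigger a crossing-pointer swap and the second pivot comparison, contributing $11$ Bytecodes each and thus $11\,(\latK)$; finally, small elements in $\positionsets{K}$ incur one extra Bytecode compared to small elements found by~$g$, contributing $\satK$ in total.

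Given this exact count, the distributional identities follow just as in \wref{lem:distribution-partitioning-comparisons}. The hypergeometric form of $\latK$ yields $11\,\hypergeometric(I_1+I_2,I_3,n-k)$ modulo the $\delta$ summand, which is absorbed into $\Oh(1)$. For $\satK$, conditional on $\vect I$ it is the number of ``red'' (small) ordinary elements falling into the $|\positionsets{K}| = I_1+I_2+\delta$ sampled positions drawn without replacement from the $n-k$ ordinary slots, so $\satK \eqdist \hypergeometric(I_1+I_2, I_1, n-k) + \Oh(1)$; by the urn symmetry of the hypergeometric distribution this equals $\hypergeometric(I_1, I_1+I_2, n-k) + \Oh(1)$. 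The main obstacle would be bookkeeping rather than mathematical content: identifying which Bytecodes scale linearly in $n$ and uniformly absorbing all remaining contributions (initialization, Insertionsort on subarrays of size at most $\isthreshold$, the overshoot $\delta\in\{0,1\}$, and sample handling) into the single $\Oh(1)$ remainder.
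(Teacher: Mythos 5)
Your proposal matches the argument the paper intends: \wref{lem:distribution-partitioning-bytecodes} is stated without an explicit proof, with the instruction-level bookkeeping delegated to Appendix~C of \citep{Wild2013Quicksortarxiv}, and your plan\,---\,a fixed per-ordinary-element cost plus per-class extra costs counted via $\latK$ and $\satK$, followed by the urn argument of \wref{lem:distribution-partitioning-comparisons} with the overshoot $\delta$ absorbed into $\Oh(1)$\,---\,is exactly that analysis adapted to pivot sampling. Your appeal to the symmetry $\hypergeometric(I_1+I_2,I_1,n-k) \eqdist \hypergeometric(I_1,I_1+I_2,n-k)$ correctly reconciles $\satK$ with the parameter order in the statement, so the approach is sound and essentially the same as the paper's.
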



\noindent
\textbf{Other cost measures} can be analyzed similarly, \eg, 
the analysis of \citet{Kushagra2014} for I/Os in the \textsl{external
memory model} is easily generalized to pivot sampling. We omit it
here due to space~constraints.

\subsubsection{Distribution of Partition Sizes}

\begin{figure}
	\def\r{1.5pt}
	\plaincenter{%
		\begin{tikzpicture}[
			every node/.style={font=\footnotesize},
		]
			\useasboundingbox (0,-.2) rectangle (5,0.4) ;
			
			\draw[|-|] (0,0) node[below=.5ex] {$0$}  -- (5,0) node[below=.5ex] {$1$};
			\filldraw (1,0) circle (\r) node[below=.5ex] {$P$};
			\filldraw (3.5,0) circle (\r) node[below=.5ex] {$Q$};
			\begin{scope}[
					yshift=1.5ex,<->,
					shorten >=.3pt,shorten <=.3pt,
					fill=white,inner sep=1pt,
			]	
				\draw (0,0)   -- node[fill] {$D_1$} (1,0) ;
				\draw (1,0)   -- node[fill] {$D_2$} (3.5,0) ;
				\draw (3.5,0) -- node[fill] {$D_3$} (5,0) ;
			\end{scope}
		\end{tikzpicture}%
	}
	\caption{%
		Graphical representation of the relation between $\vect D$ and the pivot
		values $P$ and $Q$ on the unit interval.
	}
	\label{fig:relations-DPQ}
\end{figure}
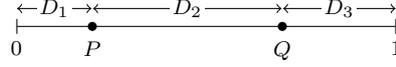

There is a close relation between $\vect I$, the number of small, medium and
large ordinary elements, and $\vect J$, the size of subproblems; 
we only have to add the sampled-out elements again before the recursive calls.
So we have $\vect J = \vect I + \vect t$ and 
$
		\Prob(\vect I = \vect i)  
	=
		\binom{i_1+t_1}{t_1} \binom{i_2+t_2}{t_2} \binom{i_3+t_3}{t_3}
		\big/
		\binom nk
$
by \wref{eq:prob-for-J-equals-j}.
Albeit valid, this form results in nasty sums with three binomials when we try
to compute expectations involving $\vect I$.

An alternative characterization of the distribution of $\vect I$ that is better
suited for our needs exploits that we have i.\,i.\,d.\ $\uniform(0,1)$
variables.
If we condition on the pivot \emph{values}, \ie, consider $P$ and $Q$
fixed, an ordinary element $U$ is small, if $U \in (0,P)$, 
medium if $U \in (P,Q)$ and 
large if $U \in (Q,1)$.
The lengths $\vect D = (D_1,D_2,D_3)$ of these three intervals 
(see \wref{fig:relations-DPQ}), 
thus are the \emph{probabilities} for an element to be small, medium or large,
respectively.
Note that this holds \emph{independently} of all other ordinary elements!
The partition sizes $\vect I$ are then obtained as the collective outcome of
$n-k$ independent drawings from this distribution, so 
conditional on $\vect D$, $\vect I$ is multinomially $\multinomial(n-k,\vect D)$
distributed.

With this alternative characterization, we have \emph{decoupled} the 
pivot \emph{ranks} (determined by $\vect I$) from the pivot
\emph{values}, which allows for a more elegant computation of expected values 
(see \wref{app:proof-of-lem-expectations}). 
This decoupling trick has (implicitly) been applied to the analysis of classic
Quicksort earlier, \eg, by \citet{neininger2001multivariate}.

\subsubsection{Distribution of Pivot Values}

The input array is initially filled with $n$ i.\,i.\,d.\ $\uniform(0,1)$
random variables 
from which we choose a sample $\{V_1,\ldots,V_k\} \subset \{U_1,\ldots,U_n\}$ 
of size $k$. 
The pivot values are then selected as order statistics of the sample:
$P \ce V_{(t_1+1)}$ and $Q \ce V_{(t_1+t_2+2)}$ 
(cf.\ \wref{sec:general-pivot-sampling}). 
In other words, $\vect D$ is the vector of \textsl{spacings}
induced by the order statistics $V_{(t_1+1)}$ and $V_{(t_1+t_2+2)}$ of $k$
i.\,i.\,d.\ $\uniform(0,1)$ variables $V_1,\ldots,V_k$, 
which is known to have a \weakemph{Dirichlet}
$\dirichlet(\vect t + 1)$ distribution
(\wref{pro:spacings-dirichlet-general-dimension} in the appendix).

\section{Expected Partitioning Costs}
\label{sec:expectations}

In \wref{sec:distributional-analysis}, we characterized the full distribution of
the costs of the first partitioning step. 
However, since those distributions are \emph{conditional} on other
random variables, we have to apply the \textsl{law of total expectation}.
By linearity of the expectation, it suffices to consider the
following summands:

\begin{lemma}
\label{lem:expectations}
	For pivot sampling parameter 
	$\vect t \in \N^3$ and
	partition sizes $\vect I \eqdist \multinomial(n-k,\vect D)$,
	based on random spacings $\vect D \eqdist \dirichlet(\vect t + 1)$, 
	the following (unconditional) expectations hold:
	\begin{align*}
			\E[I_j]
		&\wwrel=
			\frac{t_j+1}{k+1} (n-k) \,,
			\qquad\qquad 
			(j=1,2,3),
	\\
			\E\bigl[\bernoulli\bigl(\tfrac{I_3}{n-k}\bigr)\bigr]
		&\wwrel=
			\frac{t_3+1}{k+1}
			\wwrel{\wrel=} \Theta(1) \,,
			\qquad\quad
			(n\to\infty),
	\\
			\E\bigl[ \hypergeometric(I_3,I_1,n-k) \bigr]
		&\wwrel=
			\frac{(t_1+1)(t_3+1)}{(k+1)(k+2)} (n-k-1) \,,
	\\
			\E\bigl[ \hypergeometric(I_1+I_2, I_3, n-k) \bigr]
		&\wwrel=
			\frac{(t_1+t_2+2)(t_3+1)}{(k+1)(k+2)} (n-k-1) \;.
	\end{align*}
\end{lemma}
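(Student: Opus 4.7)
The plan is to apply the law of total expectation in two stages: first condition on $\vect D$ and use standard first- and second-moment formulas of the multinomial distribution, then take the outer expectation using mixed moments of the Dirichlet distribution. The four claims follow from the same two-line pattern, so the bulk of the work is packaged into a short preparatory computation.

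First I would record the Dirichlet moments. For $\vect D \eqdist \dirichlet(\vect t + 1)$, the standard formula $\E\bigl[\prod_j D_j^{m_j}\bigr] = \BetaFun(\vect t + 1 + \vect m)/\BetaFun(\vect t + 1)$ (a direct consequence of the definition of the Dirichlet density and the beta integral) specialises, using $\sum_{j=1}^3 (t_j+1) = k+1$, to
\begin{align*}
	\E[D_j] &\wwrel= \frac{t_j+1}{k+1},
	&
	\E[D_i D_j] &\wwrel= \frac{(t_i+1)(t_j+1)}{(k+1)(k+2)} \quad (i \ne j).
\end{align*}
On the multinomial side, $\vect I \given \vect D \eqdist \multinomial(n-k,\vect D)$ has the well-known conditional moments $\E[I_j \given \vect D] = (n-k) D_j$ and $\E[I_i I_j \given \vect D] = (n-k)(n-k-1) D_i D_j$ for $i\ne j$. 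Combining these via the tower property yields immediately
\begin{align*}
	\E[I_j] &\wwrel= (n-k)\,\frac{t_j+1}{k+1},
	&
	\E[I_i I_j] &\wwrel= (n-k)(n-k-1)\,\frac{(t_i+1)(t_j+1)}{(k+1)(k+2)} \quad (i \ne j).
\end{align*}
This proves the first equation of the lemma directly.

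For the remaining three equations I would invoke the elementary facts $\E[\bernoulli(p)] = p$ and $\E[\hypergeometric(K,R,N)] = KR/N$ whenever $K,R$ are conditionally fixed, together with the tower property once more. Then (2) becomes $\E[I_3]/(n-k) = (t_3+1)/(k+1) = \Theta(1)$; (3) becomes $\E[I_1 I_3]/(n-k) = (n-k-1)(t_1+1)(t_3+1)/((k+1)(k+2))$; and (4) follows by linearity, $\E[(I_1+I_2)I_3]/(n-k) = (\E[I_1 I_3]+\E[I_2 I_3])/(n-k) = (n-k-1)(t_1+t_2+2)(t_3+1)/((k+1)(k+2))$, where the numerator $(t_1+1)+(t_2+1) = t_1+t_2+2$ is exactly the form claimed.

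There is no real obstacle: the only subtlety is the book-keeping to ensure that $\sum_j (t_j+1) = k+1$ everywhere (so that the Dirichlet normalisation gives $k+2$ in the denominator of the pairwise moments), and that the inner conditional expectations of the Bernoulli and hypergeometric distributions depend only on $\vect I$ and the deterministic $n-k$, so the tower property applies cleanly. All the arithmetic is routine once the two ingredients above are in place.
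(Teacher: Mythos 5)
Your proposal is correct and follows essentially the same route as the paper's own proof: two-stage conditioning (first on $\vect I$, then on $\vect D$) via the tower property, the multinomial conditional moments $\E[I_j\given\vect D]=(n-k)D_j$ and $\E[I_iI_j\given\vect D]=(n-k)(n-k-1)D_iD_j$, the Dirichlet mixed moments, and linearity to split $(I_1+I_2)I_3$. The only difference is presentational\,---\,you spell out the specialised moment formulas directly where the paper cites its general lemmas on multinomial factorial moments and Dirichlet mixed moments.
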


\noindent
Using known properties of the involved distributions, the proof is
elementary; see \wref{app:proof-of-lem-expectations} for details.

\section{Solution of the Recurrence}
\label{sec:solution-recurrence}

\begin{theorem}
\label{thm:leading-term-expectation-hennequin}
	Let $\E[C_n]$ be a sequence of numbers satisfying
	\wildtpageref[recurrence]{eq:ECn-recurrence}{\eqref} for a constant
	$\isthreshold \ge k$ and let the toll function $\E[T_n]$ be of the form $\E[T_n] = an+\Oh(1)$ for a
	constant $a$.
	Then we have
	\smash{$
			\E[C_n] 
		\sim 
			\frac{a}{\discreteEntropy} \, n \ln n
	$},
	where $\discreteEntropy$ is given by
	\wildtpageref[equation]{eq:discrete-entropy}{\eqref}.
\end{theorem}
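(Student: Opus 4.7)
The plan is to cast the recurrence \eqref{eq:ECn-recurrence} as a one-dimensional linear divide-and-conquer recurrence and apply Roura's continuous master theorem (CMT), the standard tool for Hennequin-type analyses of Quicksort recurrences. The CMT reduces the problem to computing a single explicit integral against a ``shape function'' determined by the limiting distribution of relative subproblem sizes.

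First, I would symmetrize: by linearity of expectation, summing the three-dimensional sum in \eqref{eq:ECn-recurrence} separately for each recursive cost collapses it to
\begin{align*}
	\E[C_n] \wwrel= \E[T_n] \bin+ \sum_{l=1}^{3} \sum_{j} \Prob(J_l = j)\, \E[C_j],
\end{align*}
so only the marginal distributions of the $J_l$ matter. As explained in \wref{sec:distributional-analysis}, $J_l = I_l + t_l$, and $(n-k)^{-1} I_l$ converges in distribution to the $l$th Dirichlet coordinate $D_l$, which is marginally $\mathrm{Beta}(t_l+1,\,k-t_l)$ with density $f_l(z) = z^{t_l}(1-z)^{k-t_l-1}/\BetaFun(t_l+1, k-t_l)$. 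The combined shape function is therefore $w(z) = \sum_{l=1}^3 f_l(z)$. Because $\sum_l \E[D_l] = \sum_l (t_l+1)/(k+1) = 1$ by definition of $k$, the recurrence lies in the \emph{conservative} regime of the CMT.

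In this regime, for a toll with $\E[T_n] = an + \Oh(1)$, the CMT yields $\E[C_n] \sim (a/\tilde H)\, n\ln n$ with
\begin{align*}
	\tilde H \wwrel= -\int_0^1 z \ln z \cdot w(z) \, dz \wwrel= -\sum_{l=1}^{3} \E[D_l \ln D_l].
\end{align*}
Using the Beta/digamma identity $\int_0^1 z^{\alpha-1}(1-z)^{\beta-1}\ln z\, dz = \BetaFun(\alpha,\beta)\bigl(\psi(\alpha)-\psi(\alpha+\beta)\bigr)$, the fact that $\psi(m+1)-\psi(m'+1) = \harm{m}-\harm{m'}$ for integers $m,m'$, and $\BetaFun(t_l+2,k-t_l)/\BetaFun(t_l+1,k-t_l) = (t_l+1)/(k+1)$, a short calculation gives $-\E[D_l \ln D_l] = \frac{t_l+1}{k+1}(\harm{k+1}-\harm{t_l+1})$. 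Summing over $l$ yields $\tilde H = \discreteEntropy$ exactly as in \eqref{eq:discrete-entropy}, and the claim follows.

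The delicate part is the technical verification of the CMT's hypotheses: one must show that the discrete weights $\Prob(J_l = j)$ approximate $(1/n)\,f_l(j/n)$ uniformly well enough to control the logarithmic singularity at $z=0$ in $\tilde H$. This can be done via Stirling applied directly to the explicit binomial form $\Prob(J_l = j) = \binom{j}{t_l}\binom{n-2-j}{k-2-t_l}/\binom{n-2}{k-2}$ obtained from \eqref{eq:prob-for-J-equals-j} after marginalizing. The Insertionsort base cases contribute only lower-order terms since $\isthreshold \ge k$ is constant and $\E[\iscost_n] = \Oh(n^2)$ is bounded uniformly for $n \le \isthreshold$. If invoking the CMT machinery is undesirable, an alternative is a direct ansatz $\E[C_n] = \gamma\, n\harm{n} + R_n$ that reduces identification of the leading coefficient to Hennequin's classical telescoping identities for sums of beta-binomials; either route arrives at $\gamma = a/\discreteEntropy$, but both demand essentially the same error bookkeeping to show $R_n = o(n\ln n)$.
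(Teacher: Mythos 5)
Your proposal follows essentially the same route as the paper's own proof (\wref{app:CMT-solution}): collapse the recurrence to the marginals of $\vect J$, identify the shape function $w(z)=\sum_{l=1}^{3}w_l(z)$, where $w_l(z)=(k-t_l)\binom{k}{t_l}z^{t_l}(1-z)^{k-t_l-1}$ is precisely your $\mathrm{Beta}(t_l+1,k-t_l)$ density, land in the $H=0$ case of the CMT, and evaluate $\tilde H=-\int_0^1 z\ln z\;w(z)\,dz=\discreteEntropy$; your digamma identity is equivalent to the paper's logarithmic Beta integral \wref{eq:beta-log}. Two remarks on the technical part you rightly single out. First, your marginalized formula is off: summing \wref{eq:prob-for-J-equals-j} via Vandermonde gives $\Prob(J_l=j)=\binom{j}{t_l}\binom{n-1-j}{k-1-t_l}\big/\binom{n}{k}$, not $\binom{j}{t_l}\binom{n-2-j}{k-2-t_l}\big/\binom{n-2}{k-2}$; the latter does not sum to one and, fed into Stirling, would yield the exponent $k-t_l-2$ on $(1-z)$, contradicting your (correct) $\mathrm{Beta}(t_l+1,k-t_l)$ identification. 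With the corrected marginal your Stirling route works; the paper instead writes the exact beta-binomial probabilities \wref{eq:prob-Il-equals-i}, observes that $n\,\Prob(I_l=zn+r)$ is a rational function of $n$, and uses Lipschitz continuity of the polynomial $w$ to verify condition \wref{eq:CMT-shape-function-condition} with $d=1$. Second, no separate control of the logarithmic singularity at $z=0$ is required: once \wref{eq:CMT-shape-function-condition} holds for some $d>0$, case 2 of \wref{thm:CMT} already delivers $\E[C_n]\sim a\,n\ln n/\tilde H$, and the integral defining $\tilde H$ is finite because $w$ is a polynomial, hence bounded near $0$.
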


\wref{thm:leading-term-expectation-hennequin} has first been proven by
\citet[Proposition~III.9]{hennequin1991analyse} using arguments on the
Cauchy-Euler differential equations that the recurrence
implies for the generating function of $\E[C_n]$. 
The tool box of handy and ready-to-apply theorems has grown considerably since
then. In \wref{app:CMT-solution}, we give a concise and elementary proof using
the \textsl{Continuous Master Theorem} \citep{Roura2001}:
we show that the distribution of the \emph{relative} subproblem sizes
converges to a \textsl{Beta distribution} and that then a continuous
version of the recursion tree argument allows to solve our recurrence.
An alternative tool closer to
\citeauthor{hennequin1991analyse}'s original arguments is offered by
\citet{Chern2002}.

\medskip\noindent
\wref{thm:expected-costs} now directly follows by using
\wref{lem:expectations} on the partitioning costs from
\wref{lem:distribution-partitioning-comparisons},
\ref{lem:distribution-partitioning-swaps}
and~\ref{lem:distribution-partitioning-bytecodes} and plugging the result into
\wref{thm:leading-term-expectation-hennequin}.

\section{Discussion\,---\,Asymmetries Everywhere}
\label{sec:asymmetries-everywhere}

\begin{figure}
	\newcommand\relativebarplot[2]{%
		\begin{tikzpicture}[
			xscale=0.85
		]
		\ifthenelse{ \lengthtest{#1pt<0pt} }{%
			\colorlet{barcolor}{green!80!black}%
			\def\nodepos{right}%
			\def\format##1##2|{{\bfseries\boldmath\color{green!60!black}$-$##2\%}}%
		}{
			\ifthenelse{ \lengthtest{#1pt>0pt} }{%
				\colorlet{barcolor}{red!80!black}%
				\def\nodepos{left}%
				\def\format##1##2|{\color{barcolor}$+$##1##2\%}%
			}{%
				\colorlet{barcolor}{white}%
				\def\nodepos{right}%
				\def\format##1##2|{}%
			}%
		}
		\useasboundingbox (-1,-.15) rectangle (1,.15) ;
		\ifthenelse{ \lengthtest{#1pt<#2pt} \AND \lengthtest{-#1pt<#2pt} }{
			\draw[thin,fill=barcolor] (0,.06) rectangle ($( #1 / #2 ,-.06)$) ;
			\node[\nodepos] at ($(0,0)$) {\tiny\format#1|}  ;
		}{
			\ifthenelse{ \lengthtest{#1pt>0pt} }{
				\draw[thin,fill=barcolor] 
					(0,-.06) -- ++(.8,0) coordinate (a) -- 
					++(.03,.12) coordinate (b) -- (0,.06) -- 
					++(0,-.12) -- cycle ;
				\draw[thin,fill=barcolor]
					(a) ++(.05,0) coordinate (c) --
					++(.03,.12) coordinate (d) -- (1.2,.06) --
					++(0,-.12) -- (c) -- cycle;
				\draw[thin,shorten >=-1pt,shorten <=-1pt] (a) -- (b) ; 
				\draw[thin,shorten >=-1pt,shorten <=-1pt] (c) -- (d) ;
				\node[\nodepos] at (0,0) {\tiny\format#1|} ;
			}{
				\draw[thin,fill=barcolor] 
					(0,-.06) -- ++(-.8,0) coordinate (a) -- 
					++(.03,.12) coordinate (b) -- (0,.06) -- 
					++(0,-.12) -- cycle ;
				\draw[thin,fill=barcolor]
					(a) ++(-.05,0) coordinate (c) --
					++(.03,.12) coordinate (d) -- (-1.2,.06) --
					++(0,-.12) -- (c) -- cycle;
				\draw[thin,shorten >=-1pt,shorten <=-1pt] (a) -- (b) ; 
				\draw[thin,shorten >=-1pt,shorten <=-1pt] (c) -- (d) ;
				\node[\nodepos] at (0,0) {\tiny\format#1|} ;
			}
		}
		\draw[very thick] (0,.15) -- (0,-.15) ; 
		\end{tikzpicture}%
	}%
	\newcommand\plotst[5]{%
		\node at (#1,#2) {%
			\scalebox{0.8}{\parbox{2cm}{%
			\centering%
			\relativebarplot{#3}{30}\\[-4pt]%
			\relativebarplot{#4}{10}\\[-4pt]%
			\relativebarplot{#5}{10}%
			}}%
		};
	}%
	\plaincenter{%

	\begin{tikzpicture}[
		scale=0.7,
		xscale=2.75,
		every node/.style={font=\scriptsize}
	]
	\fill[symmetriccolor] (1.5,1.5) rectangle ++(1,1) ;
	\plotst{0}{0}{68.7}{-17.7}{38.9}
	\plotst{0}{1}{32.5}{-8.84}{20.8}
	\plotst{0}{2}{18.8}{-1.36}{17.2}
	\plotst{0}{3}{15.0}{4.76}{20.4}
	\plotst{0}{4}{18.8}{9.52}{30.1}
	\plotst{0}{5}{32.5}{12.9}{49.7}
	\plotst{0}{6}{68.7}{15.0}{94.0}
	\plotst{1}{0}{32.5}{-11.6}{17.2}
	\plotst{1}{1}{11.4}{-4.76}{6.09}
	\plotst{1}{2}{3.86}{0.680}{4.57}
	\plotst{1}{3}{3.86}{4.76}{8.81}
	\plotst{1}{4}{11.4}{7.48}{19.7}
	\plotst{1}{5}{32.5}{8.84}{44.3}
	\plotst{2}{0}{18.8}{-8.16}{9.08}
	\plotst{2}{1}{3.86}{-3.40}{0.331}
	\plotst{2}{2}{0}{0}{0}
	\plotst{2}{3}{3.86}{2.04}{5.98}
	\plotst{2}{4}{18.8}{2.72}{22.0}
	\plotst{3}{0}{15.0}{-7.48}{6.37}
	\plotst{3}{1}{3.86}{-4.76}{-1.08}
	\plotst{3}{2}{3.86}{-3.40}{0.331}
	\plotst{3}{3}{15.0}{-3.40}{11.1}
	\plotst{4}{0}{18.8}{-9.52}{7.47}
	\plotst{4}{1}{11.4}{-8.84}{1.54}
	\plotst{4}{2}{18.8}{-9.52}{7.47}
	\plotst{5}{0}{32.5}{-14.3}{13.6}
	\plotst{5}{1}{32.5}{-15.6}{11.8}
	\plotst{6}{0}{68.7}{-21.8}{32.0}
	\foreach \t in {0,...,6}{
		\node at (\t,-.9) {$t_1=\t$} ;
		\node at (-.75,\t) {$t_2=\t$} ;
	}
	\begin{scope}[overlay]
		\draw (-1,-0.5) -- ++(7.5,0) ;
		\draw (-1,0.5) -- ++(7.5,0) -- ++(0,-1.7) ;
		\draw (-1,1.5) -- ++(6.5,0) -- ++(0,-2.7) ;
		\draw (-1,2.5) -- ++(5.5,0) -- ++(0,-3.7) ;
		\draw (-1,3.5) -- ++(4.5,0) -- ++(0,-4.7) ;
		\draw (-1,4.5) -- ++(3.5,0) -- ++(0,-5.7) ;
		\draw (-1,5.5) -- ++(2.5,0) -- ++(0,-6.7) ;
		\draw (-1,6.5) -- ++(1.5,0) -- ++(0,-7.7) ;
		\draw            (-0.5,6.5) -- ++(0,-7.7) ;
	\end{scope}
	\draw[ultra thick] (2.5,0.5) rectangle ++(1,1) ;
	\begin{scope}[overlay]
			\node at (5.5,5) {%
			\scalebox{0.9}{%
			\parbox{1.3cm}{\raggedleft%
				$1/\discreteEntropy$:\\
				$a_C$:\\
				$a_C / \discreteEntropy$:
			}}\hspace{-4pt}%
			\scalebox{1.2}{%
			\parbox{2cm}{%
			\centering%
				\relativebarplot{15.0}{30}\\[-4pt]%
				\relativebarplot{-7.48}{10}\\[-4pt]%
				\relativebarplot{6.37}{10}%
			}}%
		};
	\end{scope}
	\end{tikzpicture}%
	}\\[-1.5\baselineskip]%
	\caption{%
		Inverse of discrete entropy (top), number of comparisons per partitioning step
		(middle) and overall comparisons (bottom)
		for all $\vect t$ with $k=8$, relative to the tertiles case $\vect t =
		(2,2,2)$. 
	}
	\label{fig:relative-cmps-8}
\end{figure}

\begin{table}
	\setlength\subfigbottomskip{-1ex}
	\plaincenter{%
	\footnotesize%
	\setlength\tabcolsep{0.25em}%
	\subtable[$a_C / \discreteEntropy$]{%
		\begin{tabular}{c|cccc}
			${}_{t_{1}\!\!\!}\diagdown{}^{\!\! t_{2}}$
				& 0 & 1 & 2 & 3 \\
			\hline 
			0 & 1.9956 & 1.8681 & 2.0055 & 2.4864 \\
			1 & 1.7582 & \textbf{1.7043} \cellcolor{symmetriccolor}
			                                & 1.9231 \\
			2 & 1.7308 & 1.7582 &  \\
			3 & 1.8975 & \\
		\end{tabular}%
	}%
	\hfill%
	\subtable[$a_S / \discreteEntropy$]{%
	\begin{tabular}{c|cccc}
			${}_{t_{1}\!\!\!}\diagdown{}^{\!\! t_{2}}$
				& 0 & 1 & 2 & 3 \\
			\hline 
			0 & 0.4907 & 0.4396 & 0.4121 & \textbf{0.3926} \\
			1 & 0.6319 & 0.5514 \cellcolor{symmetriccolor} & 0.5220 \\
			2 & 0.7967 & 0.7143 \\
			3 & 1.0796 \\
		\end{tabular}%
	}%
	\hfill%
	\subtable[$a_{\bytecodes} / \discreteEntropy$]{%
		\begin{tabular}{c|cccc}
			${}_{t_{1}\!\!\!}\diagdown{}^{\!\! t_{2}}$
				& 0 & 1 & 2 & 3 \\
			\hline 
			0 & 20.840 & \textbf{18.791} & 19.478 & 23.293 \\
			1 & 20.440 & 19.298 \cellcolor{symmetriccolor}& 21.264 \\
			2 & 22.830 & 22.967 \\
			3 & 29.378 \\
		\end{tabular}%
	}%
	}
	\caption{%
		$\frac{a_C}{\discreteEntropy}$, 
		$\frac{a_S}{\discreteEntropy}$ and
		$\frac{a_{\bytecodes}}{\discreteEntropy}$ 
		for all $\vect t$ with $k=5$.
		Rows resp.\ columns give $t_1$ and $t_2$; $t_3$ is then $k-2-t_1-t_2$.
		The symmetric choice $\vect t = (1,1,1)$ is shaded, the minimum is printed in
		bold.%
	}
	\label{tab:results-k5}
\end{table}

With \wref{thm:expected-costs}, we can find the optimal sampling
parameter $\vect t$ for any given sample size~$k$.
As an example, \wref{fig:relative-cmps-8} shows how $\discreteEntropy$, $a_C$
and the overall number of comparisons behave for all possible $\vect t$ with
sample size $k=8$:
the discrete entropy decreases symmetrically as we move away
from the center $\vect t = (2,2,2)$; this corresponds to the effect of less
evenly distributed subproblem sizes. 
The individual partitioning steps, however, are cheap for \emph{small} values of
$t_2$ and optimal in the extreme point $\vect t = (6,0,0)$.
For minimizing the \emph{overall} number of comparisons\,---\,the ratio of
latter two numbers\,---\,we have to find a suitable trade-off between the
center and the extreme point $(6,0,0)$; in this case the minimal total number of
comparisons is achieved with $\vect t = (3,1,2)$.

Apart from this trade-off between the evenness of subproblem sizes and the
number of comparisons per partitioning, 
\wref{tab:results-k5} shows that the optimal choices for $\vect t$ w.\,r.\,t.\
comparisons, swaps and Bytecodes heavily differ.
The partitioning costs are, in fact, in \emph{extreme conflict} with each other:
for all $k\ge 2$, the minimal values of $a_C$, $a_S$ and $a_{\bytecodes}$ among
all choices of $\vect t $ for sample size $k$ are attained for
$\vect t = (k-2,0,0)$, 
$\vect t = (0,k-2,0)$ and 
$\vect t = (0,0,k-2)$, respectively.
Intuitively this is so, as the strategy minimizing partitioning costs in
isolation is to make the cheapest path through the partitioning loop execute as
often as possible, which naturally leads to extreme choices for $\vect t$.
It then depends on the actual numbers, where the total
costs are minimized.
It is thus not possible to minimize all cost measures at
once, and the rivaling effects described above make it hard to reason
about optimal parameters merely on a qualitative level.
The number of executed Bytecode instructions is certainly more closely related
to actual running time than the pure number of comparisons and swaps, 
while it remains platform independent and deterministic.%
\footnote{%
	Counting the number of executed Bytecode instructions still ignores many  
	important effects on actual running time, \eg, costs of branch
	mispredictions in pipelined execution, cache misses and the influence of
	just-in-time compilation.
} 
We hope that the sensitivity of the optimal sampling parameter to the chosen
cost measure renews the interest in instruction-level analysis in the style
of Knuth.
Focusing only on abstract cost measures leads to
\emph{suboptimal} choices in Yaroslavskiy's Quicksort!

It is interesting to note in this context that the implementation in
Oracle's Java 7 runtime library\,---\,which uses $\vect t =
(1,1,1)$\,---\,executes asymptotically \emph{more} Bytecodes
(on random permutations) than \generalYarostM with $\vect t=(0,1,2)$, 
despite using the same sample size $k=5$.
Whether this also results in a performance gain in practice, however, depends
on details of the runtime environment \citep{Wild2013Alenex}.

\paragraph{Continuous ranks}
It is natural to ask for the optimal \emph{relative ranks} of $P$ and $Q$
if we are not constrained by the discrete nature of pivot sampling.
In fact, one might want to choose the sample size depending on those optimal
relative ranks to find a discrete order statistic that falls close to the
continuous optimum.

We can compute the optimal relative ranks by considering the limiting
behavior of $\generalYarostM$ as $k\to\infty$. 
Formally, we consider the following family of algorithms:
let \smash{$(\ui{t_l}j)_{j\in\N}$} for $l=1,2,3$ be three
sequences of non-negative integers and set $\ui kj \ce \ui{t_1}j + \ui{t_2}j +
\ui{t_3}j + 2$ for every $j\in\N$. 
Assume that we have $\ui kj \to \infty$ and
${\ui{t_l}j}/{\ui kj} \to \tau_l$ with $\tau_l\in[0,1]$ for $l=1,2,3$ as
$j\to\infty$.
Note that we have $\tau_1 + \tau_2 + \tau_3 = 1$ by definition.
For each $j\in\N$, we can apply \wref{thm:expected-costs} 
for $\generalYaros{\ui{\vect t}j}{\isthreshold}$ and then
consider the limiting behavior of the total costs for $j\to\infty$.%
\footnote{%
	Letting the sample size go to infinity implies non-constant overhead per
	partitioning step for our implementation, which is not negligible
	any more. 
	For the analysis here, you can assume an oracle that provides us
	with the desired order statistic in $\Oh(1)$. 
}
For \discreteEntropy, \wildref[equation]{eq:limit-g-entropy}{\eqref} shows
convergence to the entropy function
$\contentropy = -\sum_{l=1}^3 \tau_l \ln(\tau_l)$ and 
for the numerators $a_C$, $a_S$ and $a_\bytecodes$, it is easily seen that
\begin{align*}
		\ui{a_C}j 
	&\wwrel\to 
		\like[l]{a^*_\bytecodes}{a^*_C} 
		\wrel\ce 
		1 + \tau_2 + (2\tau_1 + \tau_2) \tau_3 \,, 
\\		\ui{a_S}j
	&\wwrel\to
		\like[l]{a^*_\bytecodes}{a^*_S} 
		\wrel\ce 
		\tau_1 + (\tau_1 + \tau_2)\tau_3 \,,
\\		\ui{a_\bytecodes}j
	&\wwrel\to
		a^*_\bytecodes 
		\wrel\ce	
		10 + 13\tau_1 + 5\tau_2 + (\tau_1+\tau_2)(\tau_1+11\tau_3)
	\;.
\end{align*}
Together, the overall number of comparisons, swaps and Bytecodes converge to
$a^*_C / \contentropy$, $a^*_S / \contentropy$ resp.\
$a^*_\bytecodes / \contentropy$;
see \wref{fig:3dplot-limit-total-costs} for plots.
\begin{figure}
	\setlength\subfigbottomskip{0pt}
	\newcommand\contourplot[5]{%
		\resizebox{.32\linewidth}!{
			\begin{tikzpicture}
				\begin{axis}[
						width=.45\linewidth,
						height=.45\linewidth,
						font=\footnotesize,
						enlargelimits=0.06,
						xmin=0,xmax=1,ymin=0,ymax=1,
						xtick={0,0.2,...,1},
						ytick={0,0.2,...,1},
						y label style={rotate=-90},
					]
					\addplot graphics [xmin=0,xmax=1,ymin=0,ymax=1] 
						{pslt-pics/#1-inf-entropy} ;
					\draw[thin] (axis cs:0,0) -- (axis cs:1,0) -- (axis cs:0,1) -- cycle ;
					\ifthenelse{\equal{#2}{}}{}{%
 						\draw[semithick,black,<-,shorten <=1.5pt] 
 							(axis cs:#2) -- +(45:{6em+#5}) 
 							node[anchor=west,inner sep=1pt] {$#3$} ;
 					}
 					\draw[semithick,black,<-,shorten <=1.5pt] 
 						(axis cs:0.3333,0.3333) -- +(45:6em-#5) 
 						node[anchor=west,inner sep=1pt]	{$#4$} ;
				\end{axis}
			\end{tikzpicture}
		}%
	}
	\subfigure[$a^*_C / \contentropy$]{%
		\contourplot{comparisons}{0.4288,0.2688}{1.4931}{1.5171}{-0.5	em}%
	}\hfill%
	\subfigure[$a^*_S / \contentropy$]{%
		\contourplot{swaps}{}{}{0.5057}{0pt}%
	}\hfill%
	\subfigure[$a^*_{\bytecodes} / \contentropy$]{%
		\contourplot{bytecodes}{0.2068,0.3486}{16.383}{16.991}{1em}%
	}%
	\caption{%
		Contour plots for the limits of the leading term coefficient of the
		overall number of comparisons, swaps and executed Bytecode instructions,
		as functions of $\vect\tau$.
		$\tau_1$ and $\tau_2$ are given on $x$- and $y$-axis, respectively, which
		determine $\tau_3$ as $1-\tau_1-\tau_2$. 
		Black dots mark global minima, white dots show the center point
		$\tau_1=\tau_2=\tau_3=\frac13$. 
		(For swaps no minimum is attained in the open simplex, see main text).
		Black dashed lines are level lines connecting ``equi-cost-ant'' points, \ie\
		points of equal costs. 
		White dotted lines mark points of equal entropy $\contentropy$. 
	}
	\label{fig:3dplot-limit-total-costs}
\end{figure}
We could not find a way to compute the minima of these functions analytically.
However, all three functions have isolated minima that can be
approximated well by numerical methods.

The number of comparisons is minimized for 
$
		\vect\tau^*_C 
	\approx
		(0.428846,0.268774,0.302380)
$.
For this choice, the expected number of comparisons is
asymptotically $1.4931 \, n\ln n$.
For swaps, the minimum is not attained inside the open simplex, but for
the extreme points $\vect\tau^*_S = (0,0,1)$ and 
$\vect\tau_S^{*\prime} = (0,1,0)$.
The minimal value of the coefficient is $0$, so the expected number of swaps
drops to $o(n\ln n)$ for these extreme points.
Of course, this is a very bad choice w.\,r.\,t.\ other cost measures, \eg,
the number of comparisons becomes quadratic, which
again shows the limitations of tuning an algorithm to one of its
basic operations in isolation.
The minimal asymptotic number of executed Bytecodes of
roughly $16.3833 \, n\ln n$ is obtained for 
$
		\vect\tau^*_\bytecodes 
	\approx
		(0.206772,0.348562,0.444666)
$.

We note again that the optimal choices
heavily differ depending on the employed cost measure and that the minima differ
significantly from the symmetric choice $\vect\tau=(\frac13,\frac13,\frac13)$.

\section{Conclusion}
\label{sec:conclusion}

In this paper, we gave the precise leading term asymptotic of the average costs
of Quicksort with Yaroslavskiy's dual-pivot partitioning method and selection of
pivots as arbitrary order statistics of a constant size sample.
Our results confirm earlier empirical findings
\citep{javacoredevel2010,Wild2013Alenex} that the inherent asymmetries of the
partitioning algorithm call for a systematic skew in selecting the
pivots\,---\,the tuning of which requires a quantitative understanding of the
delicate trade-off between partitioning costs and the distribution of
subproblem sizes for recursive calls.
Moreover, we have demonstrated that this tuning process is very sensitive to the
choice of suitable cost measures, which firmly suggests a detailed
analyses in the style of Knuth, 
instead of focusing on the number of comparisons and swaps only.

\paragraph{Future work}
A natural extension of this work would be the computation of
the linear term of costs, which is not negligible for moderate $n$.
This will require a much more detailed analysis as sorting the samples and
dealing with short subarrays contribute to the linear term of costs, but then
allows to compute the optimal choice for \isthreshold, as well.
While in this paper only expected values were considered,
the distributional analysis of \wref{sec:distributional-analysis} can be used as a
starting point for analyzing the distribution of overall costs.
Yaroslavskiy's partitioning can also be used in Quickselect
\citep{WildMahmoud2013arxiv}; the effects of generalized pivot sampling there
are yet to be studied.
Finally, other cost measures, like the number of
symbol comparisons \citep{vallee2009symbolComparisons,Fill2012},
would be interesting to analyze.

\begin{small}
\bibliography{quicksort-refs}
\end{small}

\clearpage
\appendix
\section*{Appendix}
\section{Index of Used Notation}
\label{app:notations}
\def\mydots{\xleaders\hbox to.75em{\hfill.\hfill}\hfill}

\newlength\tmpLenNotations
\newenvironment{notations}[1][10em]{%
	\small
	\newcommand\notationentry[1]{%
		\settowidth\tmpLenNotations{##1}%
		\ifthenelse{\lengthtest{\tmpLenNotations > \labelwidth}}{%
			\parbox[b]{\labelwidth}{%
				\makebox[0pt][l]{##1}\\%
			}%
		}{%
			\mbox{##1}%
		}%
		\mydots\relax%
	}%
	\begin{list}{}{%
		\setlength\labelsep{0em}%
		\setlength\labelwidth{#1}%
		\setlength\leftmargin{\labelwidth+\labelsep+1em}%
		\renewcommand\makelabel{\notationentry}%
	}
	\newcommand\notation[1]{\item[##1]}
	\raggedright
}{%
	\end{list}
}

In this section, we collect the notations used in this paper.
(Some might be seen as ``standard'', but we think
including them here hurts less than a potential misunderstanding caused by
omitting them.)

\subsection*{Generic Mathematical Notation}
\begin{notations}
\notation{$\ln n$}
	natural logarithm.
\notation{$\vect x$}
	to emphasize that $\vect x$ is a vector, it is written in \textbf{bold};\\
	components of the vector are not written in bold: $\vect x = (x_1,\ldots,x_d)$.
\notation{$X$}
	to emphasize that $X$ is a random variable it is Capitalized.
\notation{$\harm{n}$}
	$n$th harmonic number; $\harm n = \sum_{i=1}^n 1/i$.
\notation{$\dirichlet(\vect \alpha)$}
	Dirichlet distributed random variable, 
	$\vect \alpha \in \R_{>0}^d$.
\notation{$\multinomial(n,\vect p)$}
	multinomially distributed random variable; 
	$n\in\N$ and $\vect p \in [0,1]^d$ with $\sum_{i=1}^d p_i = 1$.
\notation{$\hypergeometric(k,r,n)$}
	hypergeometrically distributed random variable;
	$n\in\N$, $k,r,\in\{1,\ldots,n\}$.  
\notation{$\bernoulli(p)$}
	Bernoulli distributed random variable;
	$p\in[0,1]$.
\notation{$\uniform(a,b)$}
	uniformly in $(a,b)\subset\R$ distributed random variable. 
\notation{$\BetaFun(\alpha_1,\ldots,\alpha_d)$}
	$d$-dimensional Beta function; defined in
	\wildpageref[equation]{eq:def-beta-function}{\eqref}.
\notation{{$\E[X]$}}
	expected value of $X$; we write $\E[X\given Y]$ for the conditional expectation
	of $X$ given $Y$.
\notation{$\Prob(E)$, $\Prob(X=x)$}
	probability of an event $E$ resp.\ probability for random variable $X$ to
	attain value $x$.
\notation{$X\eqdist Y$}
	equality in distribution; $X$ and $Y$ have the same distribution.
\notation{$X_{(i)}$}
	$i$th order statistic of a set of random variables $X_1,\ldots,X_n$,\\
	\ie, the $i$th smallest element of $X_1,\ldots,X_n$. 
\notation{$\indicator{E}$}
	indicator variable for event $E$, \ie, $\indicator{E}$ is $1$ if $E$
	occurs and $0$ otherwise.
\notation{$a^{\underline b}$, $a^{\overline b}$}
	factorial powers notation of \citep{ConcreteMathematics}; 
	``$a$ to the $b$ falling resp.\ rising''.
\end{notations}

\subsection*{Input to the Algorithm}
\begin{notations}
\notation{$n$}
	length of the input array, \ie, the input size.
\notation{\arrayA}
	input array containing the items $\arrayA[1],\ldots,\arrayA[n]$ to be
	sorted; initially, $\arrayA[i] = U_i$.
\notation{$U_i$}
	$i$th element of the input, \ie, initially $\arrayA[i] = U_i$.\\
	We assume $U_1,\ldots,U_n$ are i.\,i.\,d.\ $\uniform(0,1)$ distributed.
\end{notations}

\subsection*{Notation Specific to the Algorithm}
\begin{notations}
\notation{$\vect t \in \N^3$}
	pivot sampling parameter, see \wpref{sec:general-pivot-sampling}.
\notation{$k=k(\vect t)$}
	sample size; defined in terms of $\vect t$ as $k(\vect t) = t_1+t_2+t_3+2$.
\notation{\isthreshold}
	Insertionsort threshold; for $n\le\isthreshold$, Quicksort recursion is
	truncated and we sort the subarray by Insertionsort.
\notation{$\generalYarostM$}
	abbreviation for dual-pivot Quicksort with Yaroslavskiy's partitioning method,
	where pivots are chosen by generalized pivot sampling with parameter $\vect t$
	and where we switch to Insertionsort for subproblems of size at most
	\isthreshold.
\notation{$\iscost_n$}
	(random) costs of sorting a random permutation of size $n$ with Insertionsort. 
\notation{$\vect V \in \N^k$}
	(random) sample for choosing pivots in the first partitioning step.
\notation{$P$, $Q$}
	(random) values of chosen pivots  in the first partitioning step.
\notation{small element}
	element $U$ is small if $U<P$.
\notation{medium element}
	element $U$ is medium if $P<U<Q$.
\notation{large element}
	element $U$ is large if $Q < U$.
\notation{sampled-out element}
	the $k-2$ elements of the sample that are \emph{not} chosen as pivots.	
\notation{ordinary element}
	the $n-k$ elements that have not been part of the sample.
\notation{partitioning element}
	all ordinary elements and the two pivots.
\notation{$k$, $g$, $\ell$}
	index variables used in Yaroslavskiy's partitioning method, see
	\wpref{alg:partition}. 
\notation{$\positionsets{K}$, $\positionsets{G}$}
	set of all (index) values attained by pointers $k$ resp.\ $g$ during the first
	partitioning step; see \wpref{sec:yaroslavskiys-partitioning-method} and proof of
	\wpref{lem:distribution-partitioning-comparisons}.
\notation{$\numberat{c}{P}$}
	$c\in\{s,m,l\}$, $\positionsets{P} \subset \{1,\ldots,n\}$\\
	(random) number of $c$-type ($s$mall, $m$edium or $l$arge) elements 
	that are initially located at positions in $\positionsets{P}$, \ie,
	$
		\numberat{c}{P} \wrel= 
		\bigl|\{
			i \in \positionsets{P} : U_i \text{ has type } c
		\}\bigr|. 
	$
\notation{$\latK$, $\satK$, $\satG$}
	see $\numberat{c}{P}$
\notation{$\chi$}
	(random) point where $k$ and $g$ first meet.
\notation{$\delta$}
	indicator variable of the random event that $\chi$ is on a large
	element, \ie, $\delta = \indicator{U_\chi > Q}$.
\notation{$C_n$, $S_n$, $\bytecodes_n$}
	(random) number of comparisons\,/\,swaps\,/\,Bytecodes of \generalYarostM on a
	random permutation of size $n$;
	in \wref{sec:recurrence-quicksort}, $C_n$ is used as general placeholder
	for any of the above cost measures.  
\notation{\mbox{$\toll{C}$, $\toll{S}$, $\toll{\bytecodes}$}}
	(random) number of comparisons\,/\,swaps\,/\,Bytecodes of the first
	partitioning step of \generalYarostM on a random permutation of size $n$;\\
	$\toll[n]{C}$, $\toll[n]{S}$ and $\toll[n]{\bytecodes}$ when we want to
	emphasize dependence on $n$.
\notation{$a_C$, $a_S$, $a_{\bytecodes}$}
	coefficient of the linear term of $\E[\toll[n]{C}]$, $\E[\toll[n]{S}]$ and
	$\E[\toll[n]{\bytecodes}]$; see \wpref{thm:expected-costs}.
\notation{$\discreteEntropy$}
	discrete entropy; defined in
	\wildpageref[equation]{eq:discrete-entropy}{\eqref}.
\notation{{$\contentropy[\vect p]$}}
	continuous (Shannon) entropy with basis $e$; defined in
	\wildpageref[equation]{eq:limit-g-entropy}{\eqref}.
\notation{$\vect J\in\N^3$}
	(random) vector of subproblem sizes for recursive calls;\\
	for initial size $n$, we have $\vect J \in \{0,\ldots,n-2\}^3$ with
	$J_1+J_2+J_3 = n-2$.
\notation{$\vect I\in\N^3$}
	(random) vector of partition sizes, \ie, the number of small, medium resp.\
	large \emph{ordinary} elements;
	for initial size $n$, we have $\vect I \in \{0,\ldots,n-k\}^3$ with
	$I_1+I_2+I_3 = n-k$;\\
	$\vect J = \vect I + \vect t$ and conditional on $\vect D$ we
	have $\vect I \eqdist \multinomial(n-k,\vect D)$.
\notation{{$\vect D\in[0,1]^3$}}
	(random) spacings of the unit interval $(0,1)$ induced by the pivots $P$ and
	$Q$, \ie, $\vect D = (P,Q-P,1-Q)$;
	$\vect D \eqdist \dirichlet(\vect t + 1)$.
\notation{$a^*_C$, $a^*_S$, $a^*_{\bytecodes}$}
	limit of $a_C$, $a_S$, resp.\ $a_{\bytecodes}$ for the optimal sampling
	parameter $\vect t$ when $k\to\infty$.
\notation{$\vect\tau_C^*$, $\vect\tau_S^*$, $\vect\tau_{\bytecodes}^*$}
	optimal limiting ratio $\vect t / k \to \vect \tau_C^*$ such that $a_C \to
	a^*_C$ (resp.\ for $S$ and $\bytecodes$).
\end{notations}

\clearpage
\section{Detailed Pseudocode}
\label{app:algorithms}

\subsection{Implementing Generalized Pivot Sampling}
\label{sec:generalized-pivot-sampling-implementation}

While extensive literature on the analysis of (single-pivot) Quicksort with
pivot sampling is available, most works do not specify the pivot
selection process in detail.%
\footnote{%
	Noteworthy exceptions are \citeauthor{Sedgewick1977}'s seminal works
	which give detailed code for the median-of-three strategy
	\citep{Sedgewick1975,Sedgewick1978} and \citeauthor{Bentley1993}'s
	influential paper on engineering a practical sorting
	method~\citep{Bentley1993}. 
	\Citeauthor{Martinez2001} describe a general approach of which they state
	that randomness is \emph{not} preserved, but in their analysis, they “disregard
	the small amount of sortedness [\,\dots] yielding at least a good approximation”
	\citep[Section~7.2]{Martinez2001}.
}
The usual justification is that, in any case, we only draw pivots a
\emph{linear} number of times and from a constant size sample. So for the
leading term asymptotic, the costs of pivot selection are negligible, 
and hence also the precise way of how selection is done is not important.

There is one caveat in the argumentation: 
Analyses of Quicksort usually rely on setting up a recurrence equation of
expected costs that is then solved (precisely or asymptotically).
This in turn requires the algorithm to \emph{preserve} the distribution of
input permutations for the subproblems subjected to recursive
calls\,---\,otherwise the recurrence does not hold.
Most partitioning algorithms, including the one of Yaroslavskiy, have the
desirable property to preserve randomness \citep{Wild2012}; but this is not
sufficient! 
We also have to make sure that the main procedure of Quicksort does not 
alter the distribution of inputs for recursive calls;
in connection with elaborate pivot sampling algorithms, this is harder to
achieve than it might seem at first sight.

For these reasons, the authors felt the urge to include a minute discussion
of how to implement the generalized pivot sampling scheme of
\wref{sec:general-pivot-sampling} in such a way that the recurrence equation
remains \emph{precise}.%
\footnote{%
	Note that the resulting implementation has to be considered ``academic'':
	While it is well-suited for precise analysis, it will look somewhat peculiar
	from a practical point of view and productive use is probably not
	to be recommended.
}
We have to address the following questions:

\oldparagraph{Which elements do we choose for the sample?}

In theory, a \emph{random} sample produces the most reliable results and
also protects against worst case inputs.
The use of a random pivot for classic Quicksort has been considered right from
its invention \citep{Hoare1961} and is suggested as a general strategy to deal
with biased data \citep{Sedgewick1978}.

However, all programming libraries known to the authors actually avoid the
additional effort of drawing random samples. 
They use a set of deterministically selected positions of the array, instead;
chosen to give reasonable results for common special cases like almost sorted arrays.
For example, the positions used in Oracle's Java~7 implementation are depicted
in \wref{fig:sample-choice-jre7}.

For our analysis, the input consists of i.\,i.\,d.\ random variables, so
\emph{all} subsets (of a certain size) have the same distribution.
We might hence select the positions of sample elements such that they
are convenient for our (analysis) purposes.
For reasons elaborated in \wref{sec:randomness-preservation} below, we have to 
\emph{exclude} sampled-out elements from partitioning to keep analysis feasible,
and therefore, our implementation uses the $t_1+t_2+1$ leftmost
and the $t_3+1$ rightmost elements of the array as sample, as
illustrated in \wref{fig:sample-choice-generalized-yaroslavskiy}.
Then, partitioning can be simply restricted to the range between the two parts
of the sample, namely positions $t_1+t_2+2$ through $n-t_3-1$.

\begin{figure}
	\plaincenter{%
	\begin{tikzpicture}
	[scale=0.5,baseline=2,every node/.style={font={\footnotesize}}]
		\draw (0,0) rectangle (3,1) ; 
		\draw[decoration=brace,decorate] 
		  		(0,1.25) -- node[above] {\scriptsize$\frac3{14}n$} (3,1.25) ; 
		\foreach \x in {0,3,6,9} {
		  \draw[fill=black!10,very thick] (3+\x,0) rectangle (4+\x,1) ; 
		  \draw (4+\x,0) rectangle (6+\x,1) ;
		  \draw[decoration=brace,decorate] 
		  		(4+\x,1.25) -- node[above]{\scriptsize$\frac17n$} (6+\x,1.25) ; 
		}
		\draw[fill=black!10,very thick] (15,0) rectangle (16,1) ;
	    \draw (16,0) rectangle (19,1) ;
	    \draw[decoration=brace,decorate] 
		  		(16,1.25) -- node[above]{\scriptsize$\frac3{14}n$} (19,1.25) ;
	    \node at (6.5,-0.5) {$P$};
	    \node at (12.5,-0.5) {$Q$};
		\foreach \x in {1,...,5} {
			\node at (\x*3+0.5,0.5) {$V_{\x}$} ;
		}
	\end{tikzpicture}
	}
	\caption{%
		The five sample elements in Oracle's Java~7 implementation of
		Yaroslavskiy's dual-pivot Quicksort are chosen such that
		their distances are approximately as given above. 
	}
	\label{fig:sample-choice-jre7}
\end{figure}
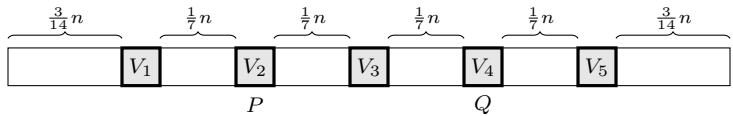

\begin{figure}[tbhp]
	\plaincenter{%
	\begin{tikzpicture}[scale=0.5,every node/.style={font=\footnotesize}]
		\begin{scope}[very thick,fill=black!10]
			\filldraw (0,0) rectangle (6,1);
			\filldraw (15,0) rectangle (20,1); 
		\end{scope}
		\begin{scope}[thick]
			\draw (3,0) -- (3,1);
			\draw (4,0) -- (4,1);
			\draw (16,0) -- (16,1); 
		\end{scope}
	
		\draw[thin] (0,0) grid (20,1) ;
		\foreach \x in {1,...,20}  \node at (\x-0.5,1.5) {\scriptsize\x} ;
	
		\begin{scope}[thick,decoration=brace, yshift=-5pt]
			\draw[decorate] (3,0) -- node[below]{$t_1$} (0,0) ;
			\draw[decorate] (6,0) -- node[below]{$t_2$} (4,0) ;
			\draw[decorate] (20,0) -- node[below]{$t_3$} (16,0) ;
		\end{scope}
		\node at (3.5,-0.5) {$P$};
		\node at (15.5,-0.5) {$Q$};
		
		\foreach \x in {1,...,6} {
			\node at (\x-0.5,0.5) {$V_{\x}$} ; 
		}
		\foreach \x in {7,...,11} {
			\node at (16-7+\x-0.5,0.5) {$V_{\x}$} ; 
		}
		
	\end{tikzpicture}
	}
	\caption{%
		Location of the sample in our implementation of \generalYarostM
		with $\vect t = (3,2,4)$.
		Only the non-shaded region $\protect\arrayA[7..15]$ is subject to partitioning. 
	}
	\label{fig:sample-choice-generalized-yaroslavskiy}
\end{figure}
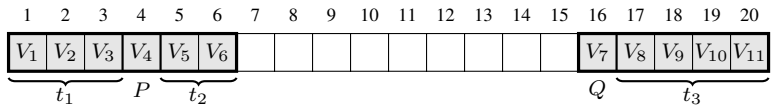

\oldparagraph{How do we select the desired order statistics from the sample?}
Finding a given order statistic of a list of elements is known as the
\weakemph{selection problem} and can be solved by specialized algorithms like
Quickselect.
Even though these selection algorithms are superior by far on large lists,
selecting pivots from a reasonably small sample is most efficiently done by
fully sorting the whole sample with an elementary sorting method.
Once the sample has been sorted, we find the pivots in $\arrayA[t_1+1]$ and
$\arrayA[n-t_3]$, respectively.

We will use an Insertionsort variant for sorting samples.
Note that the implementation has to “jump” across the gap
between the left part and the right part of the sample.
\wref{alg:samplesort-left} and its symmetric cousin \wref{alg:samplesort-right}
do that by ignoring the gap for all index variables and then correct for the gap
whenever the array is actually accessed.

\begin{figure}
	\plaincenter{%
	\begin{tikzpicture}[scale=0.5,every node/.style={font=\footnotesize}]
		\begin{scope}[very thick,fill=black!10]
			\filldraw (0,0) rectangle (6,1);
			\filldraw (15,0) rectangle (20,1); 
		\end{scope}
		\begin{scope}[thick]
			\draw (3,0) -- (3,1);
			\draw (4,0) -- (4,1);
			\draw (16,0) -- (16,1); 
		\end{scope}
	
		\draw[thin] (0,0) grid (20,1) ;
		
		\begin{scope}[thick,decoration=brace, yshift=5pt]
			\draw[decorate] (0,1) -- node[above]{$t_1$} (3,1) ;
			\draw[decorate] (4,1) -- node[above]{$t_2$} (6,1) ;
			\draw[decorate] (16,1) -- node[above]{$t_3$} (20,1) ;
		\end{scope}
		\node at (3.5,0.5) {$P$};
		\node at (15.5,0.5) {$Q$};
		
		\foreach \x in {1,...,3,7,8} 			\node at (\x-0.5,0.5) {$s$} ; 
		\foreach \x in {5,6, 9,10,11} 			\node at (\x-0.5,0.5) {$m$} ;
		\foreach \x in {12,...,15,17,18,19,20}	\node at (\x-0.5,0.5) {$l$} ;
		
		\draw[thick] ( 8,-0.1) -- ( 8,1.1) ;
		\draw[thick] (11,-0.1) -- (11,1.1) ;
		
		\begin{scope}[yshift=-4cm]
			\begin{scope}[fill=black!10]
				\fill (0,0) rectangle (3,1);
				\fill (5,0) rectangle (8,1);
				\fill (11,0) rectangle ++(1,1);
				\fill (16,0) rectangle (20,1); 
			\end{scope}
			\begin{scope}[thick]
				\draw (5,0) rectangle ++(1,1);
				\draw (11,0) rectangle ++(1,1);
			\end{scope}
		
			\draw[thin] (0,0) grid (20,1) ;
						
			\node at (6-0.5,0.5) {$P$};
			\node at (12-0.5,0.5) {$Q$};
			
			\foreach \x in {1,...,5} 	\node at (\x-0.5,0.5) {$s$} ; 
			\foreach \x in {7,...,11}	\node at (\x-0.5,0.5) {$m$} ;
			\foreach \x in {13,...,20}	\node at (\x-0.5,0.5) {$l$} ;
			
			\begin{scope}[thick,decoration=brace, yshift=-10pt]
				\draw[decorate] (5,0) -- 
					node[below=2pt]{\scriptsize left recursive call} (0,0);
				\draw[decorate] (11,0) -- 
					node[below=2pt]{\scriptsize middle recursive call} (6,0); 
				\draw[decorate] (20,0) -- 
					node[below=2pt]{\scriptsize right recursive call} (12,0);
			\end{scope}
		\end{scope}
		\def\y{-3cm}
		\begin{pgfonlayer}{background}
			\begin{scope}[black!5]
	 			\fill (3,0) -- (6,0) -- (8,\y) -- (5,\y) -- cycle ;
	 				 				 			\fill (15,0) -- (16,0) -- (12,\y) -- (11,\y) -- cycle ;
			\end{scope}
			\begin{scope}[black!50]
				\draw[densely dashed] (3,0) -- (5,\y)  (6,0) -- (8,\y) ;
				\draw[densely dotted] (6,0) -- (3,\y)  (8,0) -- (5,\y) ;
				\draw[densely dotted] (11,0) -- (15,\y)  (12,0) -- (16,\y) ;
				\draw[densely dashed] (15,0) -- (11,\y)  (16,0) -- (12,\y) ;
			\end{scope}
		\end{pgfonlayer}
	\end{tikzpicture}%
	}
	\caption{%
		\textbf{First row:} State of the array just after partitioning the ordinary
		elements (after \wref{lin:generalized-partition-call} of
		\wref{alg:generalized-yaroslavskiy}).
		The letters indicate whether the element at this location is smaller ($s$),
		between ($m$) or larger ($l$) than the two pivots $P$ and $Q$.
		Sample elements are {\setlength\fboxsep{2pt}\colorbox{black!10}{shaded}}.
		\protect\newline
		\textbf{Second row:} State of the array after pivots and sample parts have
		been moved to their partition (after \wref{lin:generalized-swap-2}).
		The “rubber bands” indicate moved regions of the array.
	}
	\label{fig:swapping-of-sampled-out}
\end{figure}

\oldparagraph{How do we deal with sampled-out elements?}
As discussed in \wref{sec:randomness-preservation}, we exclude sampled-out
elements from the partitioning range.
After partitioning, we thus have to move the $t_2$ sampled-out
elements, which actually belong between the pivots, to the middle
partition. 
Moreover, the pivots themselves have to be swapped in place. 
This process is illustrated in \wref{fig:swapping-of-sampled-out} and
spelled out in
lines~\ref*{lin:generalized-swap-t2-loop}\,--\,\ref*{lin:generalized-swap-2} of
\wref{alg:generalized-yaroslavskiy}.
Note that the order of swaps has been chosen carefully to correctly deal with
cases, where the regions to be exchanged overlap.

\subsection{Randomness Preservation}
\label{sec:randomness-preservation}

For analysis, it is vital to preserve the input distribution for
recursive calls, as this allows us to set up a recurrence equation for
costs, which in turn underlies the precise analysis of Quicksort.
While Yaroslavskiy's method (as given in \wref{alg:partition})
preserves randomness, pivot sampling requires special care.
For efficiently selecting the pivots, we \emph{sort} the entire sample, so the
sampled-out elements are far from randomly ordered; including
them in partitioning would not produce randomly ordered subarrays!
But there is also no need to include them in partitioning, as
we already have the sample divided into the three groups of 
$t_1$ small, $t_2$ medium and $t_3$ large elements.
All ordinary elements are still in random order and Yaroslavskiy's
partitioning divides them into three randomly ordered subarrays.

What remains problematic is the order of elements for recursive calls.
The second row in \wref{fig:swapping-of-sampled-out} shows the situation after
all sample elements (shaded gray) have been put into the correct subarray.
As the sample was sorted, the left and middle subarrays have sorted prefixes of
length $t_1$ resp.\ $t_2$ followed by a random permutation of the remaining
elements. Similarly, the right subarray has a sorted suffix of $t_3$ elements.
So the subarrays are \emph{not} randomly ordered, (except for the trivial case
$\vect t = 0$)! 
How shall we deal with this non-randomness?

The maybe surprising answer is that we can indeed \emph{exploit} this
non-randomness; not only in terms of a precise
analysis, but also for efficiency:
the sorted part \emph{always} lies completely inside the \emph{sample range} for
the next partitioning phase.
So our specific kind of non-randomness only affects sorting the sample (in
subsequent recursive calls), but it never affects the partitioning process
itself!

It seems natural that sorting should somehow be able to profit from partially
sorted input, and in fact, many sorting methods are known to be
\textsl{adaptive} to existing order \citep{EstivillCastro1992}.
For our special case of a fully sorted prefix or suffix of length $s\ge1$ and a
fully random rest, we can simply use Insertionsort where the first $s$
iterations of the outer loop are skipped.
Our Insertionsort implementations accept $s$ as an additional
parameter.
What is more, we can also precisely \emph{quantify} the savings resulting from
skipping the first $s$ iterations:
Apart from per-call overhead, we save exactly what it would have costed
to sort a random permutation of the length of this prefix/suffix
with Insertionsort. 
As all prefixes/suffixes have constant lengths (independent of the length
of the current subarray), precise analysis remains feasible.
Thereby, we need not be afraid of non-randomness \textit{per se}, 
as long as we can preserve the \emph{same kind} of non-randomness for recursive
calls and precisely analyze resulting costs.

\subsection{Generalized Yaroslavskiy Quicksort}

Combining the implementation of generalized pivot sampling\,---\,paying
attention to the subtleties discussed in the previous sections\,---\,with
Yaroslavskiy's partitioning method, 
we finally obtain \wref{alg:generalized-yaroslavskiy}.
We refer to this sorting method as \textsl{Generalized Yaroslavskiy Quicksort}
with pivot sampling parameter $\vect t = (t_1,t_2,t_3)$ and Insertionsort
threshold \isthreshold, shortly written as $\generalYarostM$.
We assume that $\isthreshold \ge k-1 = t_1+t_2+t_3+1$ to make sure
that every partitioning step has enough elements for pivot sampling.

The last parameter of \wref{alg:generalized-yaroslavskiy} tells the current call
whether it is a topmost call (\texttt{root}) or a recursive call on a left,
middle or right subarray of some earlier invocation. 
By that, we know which part of the array is already sorted:
for \texttt{root} calls, we cannot rely on anything being sorted,
in \texttt{left} and \texttt{middle} calls, we have a sorted prefix of length
$t_1$ resp.\ $t_2$, and for a \texttt{right} call, the $t_3$ rightmost
elements are known to be in order.
The initial call then takes the form
$\proc{GeneralizedYaroslavskiy}\,(\arrayA,1,n,\texttt{root})$.
\medskip

\begin{algorithm}[bhpt]
	\vspace{-1ex}
	\def\pl{\id{partLeft}}
	\def\pr{\id{partRight}}
	\def\pind{\id{i_p}}
	\def\qind{\id{i_q}}
	\def\case#1{$\kw{in case }\like[l]{\texttt{middle}\,}{\texttt{#1}}\;\kw{do}\;$}
		\begin{codebox}
		\Procname{$\proc{GeneralizedYaroslavskiy}\,(\arrayA,\id{left},\id{right},\id{type})$}
		\zi \Comment Assumes $\id{left} \le \id{right}$, $\isthreshold\ge k-1$
		\zi \Comment Sorts $A[\id{left},\ldots,\id{right}]$.
				\rule[-2ex]{0pt}{1ex}
		\li	\If $\id{right} - \id{left} < \isthreshold$
		\li	\Then
				\kw{case distinction} on $\id{type}$
		\li		\Do
		 			\case{root}		
		 			$\like[l]{\proc{InsertionSortRight}}{\proc{InsertionSortLeft}}
		 				\,(\arrayA,\id{left},\id{right},1)$
		\li			\case{left}	
					$\like[l]{\proc{InsertionSortRight}}{\proc{InsertionSortLeft}}
						\,(\arrayA,\id{left},\id{right},\max\{t_1,1\})$
		\li			\case{middle}	
					$\like[l]{\proc{InsertionSortRight}}{\proc{InsertionSortLeft}}
						\,(\arrayA,\id{left},\id{right},\max\{t_2,1\})$
					\li			\case{right}	
					$\proc{InsertionSortRight}\,(\arrayA,\id{left},\id{right},\max\{t_3,1\})$
				\End
		\li		\kw{end cases}
		\li	\Else
		\li			\kw{case distinction} on $\id{type}$ 
					\quad \Comment Sort sample
		\li			\Do
						\case{root}
						$\like[l]{\proc{SampleSortRight}}{\proc{SampleSortLeft}}
		 					\,(\arrayA,\id{left},\id{right},1)$
		\li				\case{left}
						$\like[l]{\proc{SampleSortRight}}{\proc{SampleSortLeft}}
		 					\,(\arrayA,\id{left},\id{right},\max\{t_1,1\})$
		\li				\case{middle}
						$\like[l]{\proc{SampleSortRight}}{\proc{SampleSortLeft}}
		 					\,(\arrayA,\id{left},\id{right},\max\{t_2,1\})$
		\li				\case{right}
						$\proc{SampleSortRight}
		 					\,(\arrayA,\id{left},\id{right},\max\{t_3,1\})$
					\End
		\li			\kw{end cases}
					\label{lin:generalized-sort-sample}
		\li			$p \gets \arrayA[\id{left} + t_1]$;
					\quad $q \gets \arrayA[\id{right}-t_3]$
		\li			$\pl \gets \id{left} + t_1 + t_2 + 1$;
					\quad $\pr \gets \id{right} - t_3 - 1$
		\li			$(\pind,\qind) \gets
						\proc{PartitionYaroslavskiy}\,(\arrayA,\pl,\pr,p,q)$
					\label{lin:generalized-partition-call}
					\rule[-2ex]{0pt}{1ex}
		\zi			\Comment Swap middle part of sample and $p$ to final place 
						(cf.\ \wref{fig:swapping-of-sampled-out})
		\li			\For $j \gets t_2 ,\ldots, 0$ 
					\quad \Comment iterate downwards
					\label{lin:generalized-swap-t2-loop}
		\li			\Do
						Swap $\arrayA[\id{left} + t_1 + j]$ and $\arrayA[\pind - t_2 + j]$
					\EndFor
		\zi			\Comment Swap $q$ to final place.
		\li			Swap $\arrayA[\qind]$ and $\arrayA[\pr+1]$
					\label{lin:generalized-swap-2}
					\rule[-2ex]{0pt}{1ex}
		\li			$\proc{GeneralizedYaroslavskiy}\,
						(\arrayA, 
						\like[l]{\pind-t_2+1,\;}{\id{left},}
						\like[l]{\pind-t_2-1,\;}{\pind-t_2-1,} 
						\like[l]{\texttt{middle}}{\texttt{left}})$
		\li			$\proc{GeneralizedYaroslavskiy}\,
						(\arrayA, 
						\like[l]{\pind-t_2+1,\;}{\pind-t_2+1,}
						\like[l]{\pind-t_2-1,\;}{\qind-1,} 
						\texttt{middle})$
		\li			$\proc{GeneralizedYaroslavskiy}\,
						(\arrayA,
						\like[l]{\pind-t_2+1,\;}{\qind+1,}
						\like[l]{\pind-t_2-1,\;}{\id{right},}
						\like[l]{\texttt{middle}}{\texttt{right}})$
			\EndIf
	\end{codebox}
	\vspace{-1ex}
	\caption{%
		\strut Yaroslavskiy's Dual-Pivot Quicksort with Generalized Pivot Sampling
	}
	\label{alg:generalized-yaroslavskiy}
\end{algorithm}

\begin{algorithm}[h]
	\vspace{-1ex}
	\def\pind{\id{i_p}}
	\def\qind{\id{i_q}}
	\begin{codebox}
		\Procname{$\proc{PartitionYaroslavskiy}\,(\arrayA,\id{left},\id{right},p,q)$}
		\zi \Comment Assumes $\id{left} \le \id{right}$. 
		\zi \Comment Rearranges \arrayA s.\,t.\ with return value $(\pind,\qind)$
				holds \smash{ $\begin{cases}
	 				\forall \: \id{left} \le j \le \pind,		& \arrayA[j] < p; \\
	 				\forall \: \id{\pind} < j < \qind,		& p \le \arrayA[j] \le q; \\
	 				\forall \: \qind \le j \le \id{right},		& \arrayA[j] \ge q .
				\end{cases}$}
				\rule[-2.5ex]{0pt}{1ex}
		\li $\ell\gets \id{left}$; 
		 	\quad $g\gets \id{right}$; 
		 	\quad $k\gets \ell$ \label{lin:yaroslavskiy-init-l-g-k} 
		\li	\While $k\le g$  \label{lin:yarosavskiy-outer-loop-branch}
		\li	\Do
				\If $\arrayA[k] < p$ \label{lin:yaroslavskiy-comp-1}
		\li		\Then
					Swap $\arrayA[k]$ and $\arrayA[\ell]$ \label{lin:yaroslavskiy-swap-1}
		\li			$\ell\gets \ell+1$ \label{lin:yaroslavskiy-l++-1}
		\li		\Else 
		\li			\If $\arrayA[k] \ge q$ \label{lin:yaroslavskiy-comp-2}
		\li			\Then
						\While $\arrayA[g] > q$ and $k<g$ \label{lin:yaroslavskiy-comp-3} 
		\li				\Do 
							$g\gets g-1$ 
						\EndWhile
		\li				\If $\arrayA[g] \ge p$ \label{lin:yaroslavskiy-comp-4}
		\li				\Then
							Swap $\arrayA[k]$ and $\arrayA[g]$ \label{lin:yaroslavskiy-swap-2}
		\li				\Else
		\li					Swap $\arrayA[k]$ and $\arrayA[g]$; \;
							Swap $\arrayA[k]$ and $\arrayA[\ell]$ \label{lin:yaroslavskiy-swap-3}
		\li					$\ell\gets \ell+1$ \label{lin:yaroslavskiy-l++-2}
						\EndIf
		\li				$g\gets g-1$ \label{lin:yaroslavskiy-g--}
					\EndIf
				\EndIf
		\li		$k\gets k+1$ \label{lin:yaroslavskiy-k++}
			\EndWhile \label{lin:yaroslavskiy-end-while}
		\li	$\ell\gets \ell-1$; \>\>\>$g\gets g+1$
		\li	\Return $(\ell, g)$
		\zi
	\end{codebox}
	\vspace{-4ex}
	\caption{\strut%
		Yaroslavskiy's dual-pivot partitioning algorithm.
	}
\label{alg:partition}
\end{algorithm}

\clearpage

\begin{algorithm}
	\vspace{-1ex}
	\begin{codebox}
		\Procname{$\proc{InsertionSortLeft}(\arrayA,\id{left},\id{right},s)$}
		\zi \Comment Assumes $\id{left} \le \id{right}$ and 
				$s \le \id{right}-\id{left} -1$.
		\zi \Comment Sorts $\arrayA[\id{left},\ldots,\id{right}]$, assuming that the
				$s$ \textit{leftmost} elements are already sorted.
				\rule[-2ex]{0pt}{1ex}
		\li	\For $i=\id{left} + s \,,\dots,\, \id{right}$
		\li	\Do
				$j\gets i-1$; \quad
				$v\gets \arrayA[i]$
		\li		\While $j \ge \id{left} \wbin\wedge v < \arrayA[j]$ 
					\label{lin:insertionsort-left-comp-1}
		\li		\Do
					$\arrayA[j+1] \gets \arrayA[j]$; \quad	\label{lin:insertionsort-left-write-1}
					$j\gets j-1$
				\EndWhile
		\li		$\arrayA[j+1] \gets v$ \label{lin:insertionsort-left-write-2}
			\EndFor
	\end{codebox}
	\vspace{-1ex}
	\caption{%
		\strut Insertionsort “from the left”, exploits sorted prefixes. 
	}
	\label{alg:insertionsort-left}
\end{algorithm}

\begin{algorithm}
	\vspace{-1ex}
	\begin{codebox}
		\Procname{$\proc{InsertionSortRight}(\arrayA,\id{left},\id{right},s)$}
		\zi \Comment Assumes $\id{left} \le \id{right}$ and 
				$s \le \id{right}-\id{left} -1$.
		\zi \Comment Sorts $\arrayA[\id{left},\ldots,\id{right}]$, assuming that the
				$s$ \textit{rightmost} elements are already sorted.
				\rule[-2ex]{0pt}{1ex}
		\li	\For $i=\id{right} - s \,,\dots,\, \id{left}$ 
			\quad\Comment iterate downwards 
		\li	\Do
				$j\gets i+1$; \quad
				$v\gets \arrayA[i]$
		\li		\While $j \le \id{right} \wbin\wedge v > \arrayA[j]$
						\label{lin:insertionsort-right-comp}
		\li		\Do
					$\arrayA[j-1] \gets \arrayA[j]$; \quad	\label{lin:insertionsort-right-write-1}
					$j\gets j+1$
				\EndWhile
		\li		$\arrayA[j-1] \gets v$ \label{lin:insertionsort-right-write-2}
			\EndFor
	\end{codebox}
	\vspace{-1ex}
	\caption{%
		\strut Insertionsort “from the right”, exploits sorted suffixes. 
	}
	\label{alg:insertionsort-right}
\end{algorithm}

\begin{algorithm}
	\vspace{-1ex}
	\begin{codebox}
		\Procname{$\proc{SampleSortLeft}(\arrayA,\id{left},\id{right},s)$}
		\zi \Comment Assumes $\id{right} - \id{left} + 1 \ge k$ and 
				$s \le t_1+t_2+1$.
		\zi \Comment Sorts the $k$ elements
		$\arrayA[\id{left}],\ldots,\arrayA[\id{left}+t_1+t_2],
		 \arrayA[\id{right}-t_3],\ldots,\arrayA[\id{right}]$, 
		\zi \Comment assuming that the $s$ leftmost 
				elements are already sorted.
				\rule[-1.75ex]{0pt}{1ex}
		\zi \Comment By $\arrayA\llbracket i \rrbracket$, 
			we denote the array cell $\arrayA[i]$, if $i \le \id{left} + t_1+t_2$, 
		\zi \Comment and $\arrayA[i+(n-k)]$ for $n=\id{right}-\id{left}+1$,
			otherwise.
				\rule[-2ex]{0pt}{1ex}
		\li $\proc{InsertionSortLeft}(\arrayA,\id{left},
					\id{left}+t_1+t_2,s)$
		\li	\For $i=\id{left} + t_1+t_2+1 \,,\dots,\, \id{left} + k - 1$
		\li	\Do
				$j\gets i-1$; \quad
				$v\gets \arrayA \llbracket i \rrbracket$
		\li		\While $j \ge \id{left} \wbin\wedge v < \arrayA\llbracket j\rrbracket$
		\li		\Do
					$\arrayA\llbracket j+1\rrbracket \gets \arrayA\llbracket j\rrbracket$;
					\quad \label{samplesort-left-write-1} $j\gets j-1$
				\EndWhile
		\li		$\arrayA\llbracket j+1\rrbracket \gets v$
				\label{lin:samplesort-left-write-2}
			\EndFor
	\end{codebox}
	\vspace{-1ex}
	
	\caption{%
		\strut Sorts the sample with Insertionsort “from the left” 
	}
	\label{alg:samplesort-left}
\end{algorithm}

\begin{algorithm}
	\vspace{-1ex}
	\begin{codebox}
		\Procname{$\proc{SampleSortRight}(\arrayA,\id{left},\id{right},s)$}
		\zi \Comment Assumes $\id{right} - \id{left} + 1 \ge k$ and 
				$s \le t_3+1$.
		\zi \Comment Sorts the $k$ elements
		$\arrayA[\id{left}],\ldots,\arrayA[\id{left}+t_1+t_2],
		 \arrayA[\id{right}-t_3],\ldots,\arrayA[\id{right}]$, 
		\zi \Comment assuming that the $s$ rightmost elements are
			already sorted.
				\rule[-1.75ex]{0pt}{1ex}
		\zi \Comment By $\arrayA\llbracket i \rrbracket$, 
			we denote the array cell $\arrayA[i]$, if $i \le \id{left} + t_1+t_2$, 
		\zi \Comment and $\arrayA[i+(n-k)]$ for $n=\id{right}-\id{left}+1$,
			otherwise.
				\rule[-2ex]{0pt}{1ex}
		\li $\proc{InsertionSortRight}(\arrayA,
					\id{right}-t_3,\id{right},s)$
		\li	\For $i=\id{left} + k - t_3 - 2 \,,\dots,\, \id{left}$ 
			\quad\Comment iterate downwards 
		\li	\Do
				$j\gets i+1$; \quad
				$v\gets \arrayA \llbracket i\rrbracket$
		\li		\While $j \le \id{left} + k \wbin\wedge 
					v > \arrayA \llbracket j \rrbracket$ 
		\li		\Do
					$\arrayA \llbracket j-1 \rrbracket \gets 
					\arrayA \llbracket j \rrbracket$; \quad	
					\label{lin:samplesort-right-write-1}
					$j\gets j+1$
				\EndWhile
		\li		$\arrayA \llbracket j-1 \rrbracket \gets v$
		\label{lin:samplesort-right-write-2}
			\EndFor
	\end{codebox}
	\vspace{-1ex}
	\caption{%
		\strut Sorts the sample with Insertionsort “from the right”
	}
	\label{alg:samplesort-right}
\end{algorithm}

\FloatBarrier

\clearpage
\section{Properties of Distributions}
\label{app:distributions}

We herein collect definitions and basic properties of the distributions used in
this paper. 
They will be needed for computing expected values in
\wref{app:proof-of-lem-expectations}.
We use the notation $x^{\overline n}$ and $x^{\underline n}$ of
\citet{ConcreteMathematics} for rising and falling factorial powers, respectively.

\subsection{Dirichlet Distribution and Beta Function}
\label{sec:dirichlet-dist-beta-function}
For $d\in\N$ let $\Delta_d$ be the standard $(d-1)$-dimensional simplex, \ie, 
\begin{align}
\label{eq:def-delta-d}
		\Delta_d
	&\wwrel\ce 
		\biggl\{
			x = (x_1,\ldots,x_d) 
			\wrel: 
			\forall i : x_i \ge 0 \; 
			\rel\wedge 
			\sum_{\mathclap{1\le i \le d}} x_i = 1
		\biggr\} \;.
\end{align}
Let $\alpha_1,\ldots,\alpha_d > 0$ be positive reals.
A random variable $\vect X \in \R^d$ is said to have the 
\emph{Dirichlet distribution} with \emph{shape parameter} 
$\vect\alpha \ce (\alpha_1,\ldots,\alpha_d)$\,---\,abbreviated as
$\vect X \eqdist \dirichlet(\vect\alpha)$\,---\,if it has a density given by
\begin{align}
\label{eq:def-dirichlet-density}
		f_{\vect X}(x_1,\ldots,x_d)
	&\wwrel\ce \begin{cases}
			\frac1{\BetaFun(\vect\alpha)} \cdot
			x_1^{\alpha_1 - 1} \cdots x_d^{\alpha_d-1} ,
			& \text{if } \vect x \in \Delta_d \,; \\
			0 , & \text{otherwise} \..
		\end{cases}
\end{align}
Here, $\BetaFun(\vect\alpha)$ is the \emph{$d$-dimensional Beta function}
defined as the following Lebesgue integral:
\begin{align}
\label{eq:def-beta-function}
		\BetaFun(\alpha_1,\ldots,\alpha_d)
	&\wwrel\ce
		\int_{\Delta_d} x_1^{\alpha_1 - 1} \cdots x_d^{\alpha_d-1} \; \mu(d \vect x)
	\;.
\end{align}
The integrand is exactly the density without the normalization constant
$\frac1{\BetaFun(\alpha)}$, hence $\int f_X \,d\mu = 1$ as needed for
probability distributions.

The Beta function can be written in terms of the Gamma function 
$\Gamma(t) = \int_0^\infty x^{t-1} e^{-x} \,dx$
as
\begin{align}
\label{eq:beta-function-via-gamma}
		\BetaFun(\alpha_1,\ldots,\alpha_d)
	&\wwrel=
		\frac{\Gamma(\alpha_1) \cdots \Gamma(\alpha_d)}
		{\Gamma(\alpha_1+\cdots+\alpha_d)} \;.
\end{align}
(For integral parameters $\vect\alpha$, a simple inductive argument and partial
integration suffice to prove~\wref{eq:beta-function-via-gamma}.)\\
Note that $\dirichlet(1,\ldots,1)$ corresponds to the uniform distribution over
$\Delta_d$.
For integral parameters $\vect\alpha\in\N^d$, $\dirichlet(\vect\alpha)$ is the
distribution of the \emph{spacings} or \emph{consecutive differences} induced by
appropriate order statistics of i.\,i.\,d.\ uniformly in $(0,1)$ distributed
random variables: 
\begin{proposition}[{{\citealt[Section\,6.4]{David2003}}}]
\label{pro:spacings-dirichlet-general-dimension}
	Let $\vect\alpha \in \N^d$ be a vector of positive integers and set $k \ce -1 +
	\sum_{i=1}^d \alpha_i$. Further let $V_1,\ldots,V_{k}$ be $k$ random variables
	i.\,i.\,d.\ uniformly in $(0,1)$ distributed.
	Denote by \smash{$V_{(1)}\le \cdots \le V_{(k)}$} their corresponding
	order statistics.
	We select some of the order statistics according to $\vect \alpha$: 
	for $j=1,\ldots,d-1$ define \smash{$W_j \ce V_{(p_j)}$}, where $p_j \ce
	\sum_{i=1}^j \alpha_i$. Additionally, we set $W_0 \ce 0$ and $W_d \ce 1$.
	
	Then, the \textit{consecutive distances} (or \textit{spacings}) $D_j \ce W_j -
	W_{j-1}$ for $j=1,\ldots,d$ induced by the selected
	order statistics $W_1,\ldots,W_{d-1}$ are Dirichlet
	distributed with parameter $\vect \alpha$: 
	\begin{align*}
			(D_1,\ldots,D_d) 
		&\wwrel\eqdist 
			\dirichlet(\alpha_1,\ldots,\alpha_d) \;.
	\end{align*}%
\qed\end{proposition}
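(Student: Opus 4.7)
The plan is to derive the joint density of the selected order statistics $W_1,\ldots,W_{d-1}$ by a direct combinatorial argument, then apply a change of variables to obtain the joint density of the spacings $D_1,\ldots,D_{d-1}$, and finally match this density term-by-term with the Dirichlet density in \wildref[equation]{eq:def-dirichlet-density}{\eqref}.

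For the first step, I would use the standard fact that the joint density of order statistics $V_{(i_1)},\ldots,V_{(i_m)}$ from $k$ i.i.d.\ $\uniform(0,1)$ variables, evaluated at $v_1\le\cdots\le v_m$, is obtained by a multinomial-type counting argument: we need exactly $i_1-1$ of the remaining samples to fall in $(0,v_1)$, exactly $i_{j+1}-i_j-1$ in $(v_j,v_{j+1})$, and exactly $k-i_m$ in $(v_m,1)$. Specializing $i_j=p_j$ and using $p_{j+1}-p_j=\alpha_{j+1}$ together with $k-p_{d-1}=\alpha_d-1$ (which holds because $k=\sum_i\alpha_i-1$), this yields on the region $0<w_1<\cdots<w_{d-1}<1$ the density
\begin{align*}
f_{\vect W}(w_1,\ldots,w_{d-1})
\wwrel=\frac{k!}{\prod_{i=1}^d(\alpha_i-1)!}\,
w_1^{\alpha_1-1}(w_2-w_1)^{\alpha_2-1}\cdots(1-w_{d-1})^{\alpha_d-1}.
\end{align*}

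Next, I would transform to the spacings $D_j=W_j-W_{j-1}$ for $j=1,\ldots,d-1$. This map is linear with lower-triangular Jacobian whose diagonal entries are all $1$, so the Jacobian determinant is $\pm1$ and the density transforms without an extra factor. Substituting $w_j-w_{j-1}=d_j$ and using $1-w_{d-1}=1-\sum_{j=1}^{d-1}d_j=d_d$, the joint density of $(D_1,\ldots,D_{d-1})$ on the simplex $\Delta_d$ (with $D_d$ determined by the remaining coordinates) becomes
\begin{align*}
\frac{k!}{\prod_{i=1}^d(\alpha_i-1)!}\,d_1^{\alpha_1-1}\cdots d_d^{\alpha_d-1}.
\end{align*}

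The only remaining task is to verify that the prefactor matches $1/\BetaFun(\vect\alpha)$ of~\wildref[equation]{eq:def-dirichlet-density}{\eqref}. By~\wildref[equation]{eq:beta-function-via-gamma}{\eqref} and $\Gamma(m)=(m-1)!$ on positive integers, we have $\BetaFun(\alpha_1,\ldots,\alpha_d)=\prod_i(\alpha_i-1)!/(k+1-1)!=\prod_i(\alpha_i-1)!/k!$, since $\sum_i\alpha_i=k+1$. Hence the two normalization constants agree, and the density we derived is exactly that of $\dirichlet(\alpha_1,\ldots,\alpha_d)$, completing the proof. I expect the main technical hurdle to be bookkeeping the combinatorial factors in the joint order-statistic density and aligning the index shift $k+1=\sum\alpha_i$ in the Beta normalization; both are routine but easy to miscount.
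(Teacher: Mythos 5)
Your proof is correct. Note that the paper itself does not prove this proposition at all\,---\,it is quoted as a known result from \citet[Section~6.4]{David2003}\,---\,so there is no internal proof to compare against; what you have written is the standard textbook derivation that such a reference would contain, and it is a welcome self-contained justification. The bookkeeping you flag as the main hurdle does check out: with $p_j=\sum_{i\le j}\alpha_i$ the gap counts are $p_1-1=\alpha_1-1$, $p_{j+1}-p_j-1=\alpha_{j+1}-1$ and $k-p_{d-1}=\alpha_d-1$ (using $k=\sum_i\alpha_i-1$), the map $(w_1,\ldots,w_{d-1})\mapsto(d_1,\ldots,d_{d-1})$ is unitriangular so the Jacobian is $1$, and $\BetaFun(\vect\alpha)=\prod_i(\alpha_i-1)!\,/\,k!$ by \wref{eq:beta-function-via-gamma} since $\sum_i\alpha_i=k+1$, so the normalizing constants agree. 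One small point worth making explicit: the Dirichlet ``density'' in \wref{eq:def-dirichlet-density} is a density with respect to $(d-1)$-dimensional Lebesgue measure on the simplex $\Delta_d$ (the last coordinate being determined by the others), which is exactly the object you computed by working with $(D_1,\ldots,D_{d-1})$ and setting $d_d=1-d_1-\cdots-d_{d-1}$; saying this in one sentence closes the identification cleanly. You should also state that the argument uses $\alpha_i\ge1$ integral (so the indices $p_j$ are strictly increasing and all exponents are non-negative integers), which is precisely the hypothesis of the proposition.
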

\smallskip

In the computations of \wref{sec:expectations}, mixed moments of Dirichlet distributed
variables will show up, which can be dealt with using the following general
statement.
\begin{lemma}
\label{lem:dirichlet-mixed-moments}
	Let $\vect X = (X_1,\ldots,X_d) \in \R^d$ be a $\dirichlet(\vect\alpha)$
	distributed random variable with parameter $\vect\alpha = (\alpha_1,\ldots,\alpha_d)$.
	Let further $m_1,\ldots,m_d \in \N$ be non-negative integers and abbreviate
	the sums $A \ce \sum_{i=1}^d \alpha_i$ and $M \ce \sum_{i=1}^d m_i$. 
	Then we	have 
	\begin{align*}
			\E\bigl[ X_1^{m_1} \cdots X_d^{m_d} \bigr]
		&\wwrel=
			\frac{\alpha_1^{\overline{m_1}} \cdots \alpha_d^{\overline{m_d}}}
				{A^{\overline M}} \;.
	\end{align*}
\end{lemma}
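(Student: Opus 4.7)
The plan is to compute the mixed moment directly from the definition of the Dirichlet density and then recognize the result as an unnormalized Dirichlet integral with shifted parameters. Writing out the expectation explicitly,
\begin{align*}
    \E[X_1^{m_1}\cdots X_d^{m_d}]
    &\wwrel= \frac{1}{\BetaFun(\vect\alpha)}
        \int_{\Delta_d} x_1^{\alpha_1+m_1-1} \cdots x_d^{\alpha_d+m_d-1} \,\mu(d\vect x).
\end{align*}
The integrand on the right is precisely the unnormalized density of a $\dirichlet(\vect\alpha+\vect m)$ distribution (which is well defined since $\alpha_i+m_i>0$ for all~$i$). Thus by the definition of $\BetaFun$ in~\wildref[equation]{eq:def-beta-function}{\eqref}, the integral equals $\BetaFun(\vect\alpha+\vect m)$, and the mixed moment is the ratio $\BetaFun(\vect\alpha+\vect m)/\BetaFun(\vect\alpha)$.

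Next, I would invoke the Gamma-function representation \wildref[equation]{eq:beta-function-via-gamma}{\eqref} to rewrite both Beta values. After cancellation we obtain
\begin{align*}
    \frac{\BetaFun(\vect\alpha+\vect m)}{\BetaFun(\vect\alpha)}
    &\wwrel= \frac{\Gamma(A)}{\Gamma(A+M)}\,
        \prod_{i=1}^{d} \frac{\Gamma(\alpha_i+m_i)}{\Gamma(\alpha_i)}.
\end{align*}
The final step is the identity $\Gamma(x+n)/\Gamma(x) = x^{\overline{n}}$ for $n\in\N$, which follows immediately from the functional equation $\Gamma(x+1)=x\Gamma(x)$ by induction on~$n$. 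Applying it to each of the $d+1$ Gamma ratios above yields the claimed expression $\alpha_1^{\overline{m_1}}\cdots\alpha_d^{\overline{m_d}} / A^{\overline{M}}$.

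There is no real obstacle here: the only subtlety is justifying that \wildref[equation]{eq:beta-function-via-gamma}{\eqref} applies to non-integer parameters $\alpha_i+m_i$, but the paper states the Gamma identity for general positive reals, so this is immediate. The argument is a standard textbook computation and requires no additional machinery beyond what is recalled earlier in the excerpt.
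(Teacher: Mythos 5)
Your proof is correct and follows essentially the same route as the paper: write the moment as a Dirichlet integral, identify it as the ratio $\BetaFun(\vect\alpha+\vect m)/\BetaFun(\vect\alpha)$, and reduce via the Gamma representation together with $\Gamma(z+n)/\Gamma(z)=z^{\overline n}$. Nothing is missing.
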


\begin{proof}
Using $\frac{\Gamma(z+n)}{\Gamma(z)} = z^{\overline n}$ for all
$z\in\R_{>0}$ and $n \in \N$, we compute
\begin{align}
		\E\bigl[ X_1^{m_1} \cdots X_d^{m_d} \bigr]
	&\wwrel=
		\int_{\Delta_d} 
			x_1^{m_1} \cdots x_d^{m_d} \cdot
			\frac{x_1^{\alpha_1-1} \cdots x_d^{\alpha_d-1}}{\BetaFun(\vect\alpha)}
		\; \mu(dx)
	\\	&\wwrel=
		\frac{\BetaFun(\alpha_1 + m_1,\ldots,\alpha_d + m_d)}
			{\BetaFun(\alpha_1,\ldots,\alpha_d)}
	\\	&\wwrel{\eqwithref{eq:beta-function-via-gamma}}
		\frac{ \alpha_1^{\overline{m_1}} \cdots \alpha_d^{\overline{m_d}} }
			{ A^{\overline M} } \;.
\end{align}
\end{proof}

For completeness, we state here a two-dimensional Beta integral with an
additional logarithmic factor that is needed in \wref{app:CMT-solution} (see
also \citealt[Appendix~B]{Martinez2001}):
\begin{align}
		\BetaFun_{\ln}(\alpha_1,\alpha_2) 
	&\wwrel\ce 
		- \int_0^1 x^{\alpha_1-1} (1-x)^{\alpha_2-1} \ln x \, dx
\notag\\	&\wwrel{\like[r]\ce=}
		\BetaFun(\alpha_1, \alpha_2) 
		(\harm{\alpha_1+\alpha_2-1} -	\harm{\alpha_1-1}) \;.
\label{eq:beta-log}
\end{align}

For integral parameters $\vect\alpha$, the proof is elementary: 
By partial integration, we can find a recurrence equation for $\BetaFun_{\ln}$:
\begin{align*}
		\BetaFun_{\ln}(\alpha_1,\alpha_2)
	&\wwrel=
		\frac1{\alpha_1} \BetaFun(\alpha_1,\alpha_2) \bin+ 
		\frac{\alpha_2-1}{\alpha_1} \BetaFun_{\ln}(\alpha_1+1,\alpha_2-1) \;.
\end{align*}
Iterating this recurrence until we reach the base case $\BetaFun_{\ln}(a,0) =
\frac1{a^2}$ and using \wref{eq:beta-function-via-gamma} to expand the
Beta function, we obtain \wref{eq:beta-log}.

\needspace{5cm}
\subsection{Multinomial Distribution}
\label{sec:multinomial-distribution}

Let $n,d \in \N$ and $k_1,\ldots,k_d\in\N$. \emph{Multinomial coefficients} are
a multidimensional extension of binomials:
\begin{align*}
		\binom{n}{k_1,k_2,\ldots,k_d}
	&\wwrel\ce 
	\begin{cases} \displaystyle
			\frac{n!}{k_1! k_2! \cdots k_d!} ,
		& \displaystyle \text{if } 
			n=\sum_{i=1}^d k_i \;; \\[1ex]
		0 , & \text{otherwise} .
	\end{cases}
\end{align*}
Combinatorially, $\binom{n}{k_1,\ldots,k_d}$ is the number of ways to
partition a set of $n$ objects into $d$ subsets of respective sizes
$k_1,\ldots,k_d$ and thus 
they appear naturally in the \weakemph{multinomial theorem}:
\begin{align}
\label{eq:multinomial-theorem}
		(x_1 + \cdots + x_d)^n
	&\wwrel= \mkern-10mu
		\sum_{\substack{i_1,\ldots,i_d \in \N \\ i_1+\cdots+i_d = n}} \mkern-5mu 
			\binom{n}{i_1,\ldots,i_d} \; x_1^{i_1} \cdots x_d^{i_d} 
		\qquad\qquad\text{for } n\in\N \;.
\end{align}

Let $p_1,\ldots,p_d \in [0,1]$ such that $\sum_{i=1}^d p_i = 1$.
A random variable $\vect X\in\N^d$ is said to have \emph{multinomial
distribution} with parameters $n$ and $\vect p = (p_1,\ldots,p_d)$\,---\,written shortly
as~$\vect X \eqdist \multinomial(n,\vect p)$\,---\,if for any 
$\vect i = (i_1,\ldots,i_d) \in \N^d$
holds
\begin{align*}
		\Prob(\vect X = \vect i)
	&\wwrel=
		\binom{n}{i_1,\ldots,i_d} \; p_1^{i_1} \cdots p_d^{i_d} \;.
\end{align*}

We need some expected values involving multinomial variables.
They can be expressed as special cases of the following mixed factorial moments.

\begin{lemma}
	\label{lem:multinomial-mixed-factorial-moments}
	Let $p_1,\ldots,p_d \in [0,1]$ such that $\sum_{i=1}^d p_i =1$
	and consider a $\multinomial(n,\vect p)$ distributed variable
	$\vect X = (X_1,\ldots,X_d) \in \N^d$.
	Let further $m_1,\ldots,m_d \in \N$ be non-negative integers and abbreviate 
	their sum as $M \ce \sum_{i=1}^d m_i$. 
	Then we have 
	\begin{align*} 
			\E\bigl[
				(X_1)^{\underline{m_1}} \cdots (X_d)^{\underline{m_d}} 
			\bigr] 
		&\wwrel=
			n^{\underline M} \, p_1^{m_1} \cdots p_d^{m_d} \;.
	\end{align*}
\end{lemma}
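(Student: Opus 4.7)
The plan is to prove the identity by a direct computation from the probability mass function, with the multinomial theorem closing the final sum. An alternative, more conceptual route via indicator variables works equally well and I will sketch it as a sanity check at the end.

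For the direct approach, I would start from the definition and write
\begin{align*}
		\E\Bigl[\prod_{j=1}^d (X_j)^{\underline{m_j}}\Bigr]
	&\wwrel=
		\sum_{\vect i \in \N^d : \sum_j i_j = n}
			\binom{n}{i_1,\ldots,i_d}
			\prod_{j=1}^d p_j^{i_j} \, (i_j)^{\underline{m_j}} \,.
\end{align*}
The key algebraic step is to absorb the falling factorials into the multinomial coefficient via $(i_j)^{\underline{m_j}}/i_j! = 1/(i_j-m_j)!$, with the convention that the contribution vanishes when $i_j < m_j$ (consistent with $(i_j)^{\underline{m_j}} = 0$ there). The summand then becomes $n! \prod_j p_j^{i_j}/\prod_j (i_j-m_j)!$.

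Next I would substitute $k_j \ce i_j - m_j$, shifting the summation to $\vect k \in \N^d$ with $\sum_j k_j = n-M$. Factoring the constants $n^{\underline M} = n!/(n-M)!$ and $\prod_j p_j^{m_j}$ out of the sum leaves exactly a multinomial expansion:
\begin{align*}
		\E\Bigl[\prod_{j=1}^d (X_j)^{\underline{m_j}}\Bigr]
	&\wwrel=
		n^{\underline M} \prod_{j=1}^d p_j^{m_j}
		\sum_{\vect k : \sum_j k_j = n-M} \binom{n-M}{k_1,\ldots,k_d} \prod_j p_j^{k_j} \,,
\end{align*}
and the inner sum equals $\bigl(\sum_j p_j\bigr)^{n-M} = 1$ by \wildref[equation]{eq:multinomial-theorem}{\eqref}.

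As a cross-check, the identity has a transparent probabilistic meaning: realize $\vect X$ as the type counts over $n$ independent trials, each producing type $j$ with probability $p_j$. Then $(X_j)^{\underline{m_j}}$ counts ordered $m_j$-tuples of \emph{distinct} trial indices all of type $j$, and the product $\prod_j (X_j)^{\underline{m_j}}$ counts compatible $d$-blocks of such tuples. Since a trial has exactly one type, indices across different blocks are automatically distinct, so the expectation factorises over the $M$ selected trial slots, giving $n^{\underline M}$ ordered choices of distinct slots, each contributing probability $\prod_j p_j^{m_j}$ by independence. I anticipate no real obstacle here; the only mild subtlety is the boundary case $i_j < m_j$, which is handled cleanly by the falling-factorial convention noted above.
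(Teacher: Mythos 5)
Your proof is correct and follows essentially the same route as the paper: expand the expectation over the multinomial probability mass function, absorb the falling factorials into the multinomial coefficient, shift the summation indices, and close the sum with the multinomial theorem \wildref[equation]{eq:multinomial-theorem}{\eqref}. The indicator-variable sanity check is a nice bonus but not part of the paper's argument.
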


\begin{proof}
We compute
\begin{align}
		\E\bigl[ (X_1)^{\underline{m_1}} \cdots (X_d)^{\underline{m_d}} \bigr]
	&\wwrel=
		\sum_{\vect x \in \N^d} 
			x_1^{\,\underline{m_1}}\cdots x_d^{\,\underline{m_d}}
			\binom{n}{x_1,\ldots,x_d} \;
			p_1^{x_1} \cdots p_d^{x_d} 
	\nonumber\\ &\wwrel=
		n^{\underline M} \, p_1^{m_1} \cdots p_d^{m_d} \times{} 
	\nonumber\\*&\wwrel\ppe
		\sum_{\substack{\vect x \in \N^d : \\ \forall i : x_i \ge m_i}}
			\mkern -10mu \binom{n-M}{x_1-m_1,\ldots,x_d-m_d} \;
		p_1^{x_1-m_1} \cdots p_d^{x_d-m_d} 
	\nonumber\\ &\wwrel{\eqwithref{eq:multinomial-theorem}}
		n^{\underline M} \, p_1^{m_1} \cdots p_d^{m_d}
		\;  
		\bigl(\.\. \smash{ \underbrace{p_1 + \cdots + p_d} _ {=1} } \.\. \bigr)^{n-M}
	\nonumber\\ &\wwrel=
		n^{\underline M} \, p_1^{m_1} \cdots p_d^{m_d}
		\;.
\end{align}
\end{proof}

\clearpage

\section[Proof of Lemma 6.1]{Proof of \wref{lem:expectations}}
\label{app:proof-of-lem-expectations}

We recall that
$		\vect D
	\eqdist	
		\dirichlet(\vect t + 1)
$ and 
$
		\vect I
	\eqdist
		\multinomial(n-k,\vect D)
$
and start with the simple ingredients: $\E[I_j]$ for $j=1,2,3$.
\begin{align}
		\E[I_j]	
	&\wwrel=	
		\E_{\vect D} \bigl[ \E[ I_j \given \vect D = \vect d] \bigr]
\nonumber\\	&\wwrel{\eqwithref[r]{lem:multinomial-mixed-factorial-moments}}	
							\E_{\vect D} \bigl[ D_j (n-k) \bigr] 
\nonumber\\	&\wwrel{\eqwithref[r]{lem:dirichlet-mixed-moments}}	
							(n-k) \frac{t_j+1}{k+1}
		\;.
\label{eq:expectation-Ij} 
\end{align}
The term $\E\bigl[\bernoulli\bigl(\frac{I_3}{n-k}\bigr)\bigr]$ is then easily
computed using~\wref{eq:expectation-Ij}:
\begin{align}
\label{eq:expectation-delta}
		\E\bigl[\bernoulli\bigl(\tfrac{I_3}{n-k}\bigr)\bigr]
	&\wwrel=
		\frac{\E[{I_3}]}{n-k}
	\wwrel=
		\frac{t_3+1}{k+1}
	\wwrel{\wwrel=} \Theta(1) \;.
\end{align}
This leaves us with the hypergeometric variables; using the well-known formula
$\E[\hypergeometric(k,r,n)] = k\frac rn$, we find
\begin{align}
		\E\bigl[ \hypergeometric(I_1+I_2, I_3, n-k) \bigr]
	&\wwrel=
		\E_{\vect I} \Bigl[ 
			\E\bigl[ \hypergeometric(i_1+i_2, i_3, n-k) \given \vect I = \vect i \bigr] 
		\Bigr]
	\nonumber\\	&\wwrel= 
		\E\left[ \frac{(I_1+I_2) I_3}{n-k} \right]
	\nonumber\\	&\wwrel=
		\E_{\vect D} \left[ 
			  \frac{ \E[ I_1 I_3 \given \vect D] 
			+ \E[ I_2 I_3 \given \vect D ] }{n-k}
		\right]
	\nonumber\\ &\wwrel{\eqwithref[r]{lem:multinomial-mixed-factorial-moments}}
		\frac{ (n-k)^{\underline 2} \E[D_1 D_3] + 
				(n-k)^{\underline 2} \E[D_2 D_3] } 
			{n-k}
	\nonumber\\ &\wwrel{\eqwithref[r]{lem:dirichlet-mixed-moments}}
		\frac{\bigl((t_1+1)+(t_2+1)\bigr)(t_3+1)}{(k+1)^{\overline2}} (n-k-1)
		\;.
\label{eq:expectation-sm-at-G}  
\end{align}
The second hypergeometric summand is obtained similarly. 
\hfill\proofSymbol

\clearpage
\section{Solution to the Recurrence}
\label{app:CMT-solution}

An elementary proof can be given for
\wref{thm:leading-term-expectation-hennequin} using
\citeauthor{Roura2001}'s \emph{Continuous Master Theorem} (CMT)
\citep{Roura2001}.
The CMT applies to a wide class of full-history recurrences whose coefficients
can be well-approximated asymptotically by a so-called \emph{shape function}
$w:[0,1] \to \R$. 
The shape function describes the coefficients only depending on the \emph{ratio}
$j/n$ of the subproblem size $j$ and the current size $n$ (not depending on $n$
or $j$ itself) and it smoothly continues their behavior to any real number
$z\in[0,1]$.
This continuous point of view also allows to compute precise asymptotics
for complex discrete recurrences via fairly simple integrals.

\begin{theorem}[{{\citealt[Theorem~18]{Martinez2001}}}]
\label{thm:CMT}
	Let $F_n$ be recursively defined~by
	\begin{align}
	\label{eq:CMT-recurrence}
		F_n \wwrel= \begin{cases}
			b_n,	&\text{for~} 0 \le n < N; \\
			\displaystyle{ \vphantom{\bigg|}
				t_n \bin+ \smash{\sum_{j=0}^{n-1} w_{n,j} \, F_j}, 
			} 	&\text{for~} n \ge N\,
		\end{cases}
	\end{align}
	where the toll function satisfies $t_n \sim K n^\alpha \log^\beta(n)$ as
	$n\to\infty$ for constants $K\ne0$, $\alpha\ge0$ and $\beta > -1$.
	Assume there exists a function $w:[0,1]\to \R$, 
		such that 
	\begin{align}
	\label{eq:CMT-shape-function-condition}
		\sum_{j=0}^{n-1} \,\biggl|
			w_{n,j} \bin- \! \int_{j/n}^{(j+1)/n} \mkern-15mu w(z) \: dz
		\biggr|
		\wwrel= \Oh(n^{-d}),
		\qquad\qquad(n\to\infty),
	\end{align}
	for a constant $d>0$.
	With \smash{$\displaystyle H \ce 1 - \int_0^1 \!z^\alpha w(z) \, dz$}, we
	have the following cases:
	\begin{enumerate}[itemsep=0ex]
		\item If $H > 0$, then $\displaystyle F_n \sim \frac{t_n}{H}$.
		\item \label{case:CMT-H0} 
		If $H = 0$, then 
		$\displaystyle
		F_n \sim \frac{t_n \ln n}{\tilde H}$ with 
		$\displaystyle \tilde H = -(\beta+1)\int_0^1 \!z^\alpha \ln(z) \, w(z) \, dz$.
		\item If $H < 0$, then $F_n \sim \Theta(n^c)$ for the unique
		$c\in\R$ with $\displaystyle\int_0^1 \!z^c w(z) \, dz = 1$.
	\end{enumerate}
\qed\end{theorem}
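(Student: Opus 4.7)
The plan is to prove the three cases by a bootstrapping induction on $n$, after reducing the weighted sum $\sum_j w_{n,j} F_j$ to a Riemann-type integral via the shape function.

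First I would establish a substitution lemma: for any non-negative $g:\N\to\R$ with tame variation (e.g.\ $g(n) = n^\gamma \log^\delta n$), condition \eqref{eq:CMT-shape-function-condition} yields
\begin{equation*}
\sum_{j=0}^{n-1} w_{n,j}\, g(j) \wwrel= \int_0^1 w(z)\, g(nz)\,dz \bin+ O\bigl(n^{-d}\, g(n)\bigr),
\end{equation*}
since $g$ varies slowly enough on the cells $[j/n,(j+1)/n]$ that replacing $g(j)$ by $g(nz)$ is absorbed by the budget in the hypothesis. Next, for the ansätze $g(n) = n^\alpha \log^\gamma n$, the elementary expansion $(\log n+\log z)^\gamma \sim \log^\gamma n + \gamma \log^{\gamma-1}n \cdot \log z$ gives
\begin{equation*}
\int_0^1 w(z)\, g(nz)\,dz \wwrel\sim g(n)\int_0^1 z^\alpha w(z)\, dz \bin+ \gamma\, n^\alpha \log^{\gamma-1}n \int_0^1 z^\alpha \log(z)\, w(z)\, dz.
\end{equation*}

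Case 1 ($H>0$) then follows by a single-ansatz induction: I would guess $F_n \le (1/H + \varepsilon) t_n$ and use the integral computation with $\gamma = \beta$ to convert the recurrence's right-hand side into $t_n + (1/H + \varepsilon)(1-H) t_n + o(t_n) = \bigl((1/H+\varepsilon) - \varepsilon H\bigr) t_n + o(t_n)$, which is strictly below the ansatz and thus closes the induction with margin $\varepsilon H\, t_n$. A symmetric lower bound yields asymptotic equality. Case 3 ($H<0$) works analogously: the map $c\mapsto \int_0^1 z^c w(z)\,dz$ is continuous and strictly decreasing on $[\alpha,\infty)$, taking the value $1-H>1$ at $c=\alpha$ and tending to $0$ as $c\to\infty$, so there is a unique exponent $c>\alpha$ with $\int z^c w(z)\,dz = 1$; the ansatz $F_n \le Cn^c$ closes because the toll $t_n=o(n^c)$ is absorbed.

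Case 2 ($H=0$) is the technically delicate one and will be the main obstacle. Here the leading constants balance exactly, so I would use the refined ansatz $F_n \le (1/\tilde H + \varepsilon)\, t_n \log n$. Applying the substitution lemma with $g(n) = n^\alpha \log^{\beta+1}n$, the factor $\int z^\alpha w(z)\,dz=1$ cancels the leading $g(n)$ contribution on both sides exactly, and the surviving sub-leading term from the $\log z$ correction is precisely $-(1/\tilde H + \varepsilon)\tilde H \cdot t_n$, which combined with the toll $+t_n$ leaves a negative margin $-\varepsilon\tilde H\, t_n$ that absorbs the error terms. The hard part is that we are now extracting a sub-leading contribution, so the substitution lemma's error $O(n^{-d}g(n))$ must be refined to $o(t_n)$; this demands careful treatment of $w$ near $z=0$ and $z=1$, and of the small-$j$ boundary terms where the asymptotics have not yet stabilized, both of which must be bounded uniformly and absorbed into the sub-leading error budget before the induction closes.
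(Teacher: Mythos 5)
The paper does not actually prove this statement: it is Roura's Continuous Master Theorem, quoted from the literature (Theorem~18 of Mart{\'i}nez and Roura) and used as a black box\,---\,the \qedsymbol{} after the statement marks it as imported, and the paper's own contribution (Appendix~E) is only the \emph{application} of Case~2 to the Quicksort recurrence. So your proposal can only be judged on its own merits, not against an internal argument of the paper.

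Your route\,---\,turning $\sum_j w_{n,j}F_j$ into $\int_0^1 w(z)F(nz)\,dz$ via the shape-function condition and closing a bootstrapping induction with the ans\"atze $c\,t_n$, $c\,t_n\ln n$ and $Cn^c$\,---\,is the standard way such theorems are proved, and Cases~1 and~2 are plausible in outline; note that in Case~2 the coarse error $O(n^{-d}g(n))=O(n^{\alpha-d}\log^{\beta+1}n)$ is automatically $o(t_n)$ because $d>0$, so the genuinely delicate points are exactly the ones you only flag: the cell-wise replacement $g(j)\mapsto g(nz)$, the region $z=O(1/n)$ where the expansion of $(\log n+\ln z)^{\beta+1}$ breaks down, and the finitely many small indices below the induction threshold. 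The concrete gap is Case~3. Your claim that the ansatz $F_n\le Cn^c$ closes ``because the toll $t_n=o(n^c)$ is absorbed'' fails: $c$ is defined precisely by $\int_0^1 z^c w(z)\,dz=1$, so substituting the ansatz reproduces $Cn^c$ with \emph{zero} margin, and the positive toll plus the approximation errors sit strictly on top of it\,---\,the induction does not close. You need a strengthened ansatz with a lower-order correction, for instance $F_n\le Cn^c-Dn^{c'}$ for some $\alpha\le c'<c$ with $\int_0^1 z^{c'}w(z)\,dz>1$, whose regeneration yields a margin of order $n^{c'}$ that absorbs $t_n$ and the $O(n^{-d})$ errors, together with a matching lower bound to get $\Theta(n^c)$. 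Also be aware that throughout you implicitly use $w_{n,j}\ge0$ and $w\ge0$ (to push one-sided bounds through the recurrence and for the monotonicity of $c\mapsto\int_0^1 z^cw(z)\,dz$); these hypotheses are not listed in the statement as reproduced here, though they hold in the paper's application, where the weights are probabilities.
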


\smallskip\noindent
The analysis of single-pivot Quicksort with pivot sampling is the application
par excellence for the CMT \citep{Martinez2001}. 
We will generalize this work of \citeauthor{Martinez2001} to the dual pivot
case.

\subsection{Rewriting the Recurrence}
We start from the distributional
equation~\wref{eq:distributional-recurrence} by conditioning 
on $\vect J$. For $n > \isthreshold$, this gives
\begin{align*}
		C_n 
	&\wwrel= 
		T_n \wbin+ 
		\sum_{j=0}^{n-2} \Bigl( 
			\indicator{J_1=j} C_j
			\bin+\indicator{J_2=j} C'_j
			\bin+\indicator{J_3=j} C''_j
		\Bigr)
		\;.
\end{align*}
Taking expectations on both sides and exploiting independence yields
\begin{align*}
		\E[C_n] 
	&\wwrel= 
		\E[T_n] \wbin+ \sum_{l=1}^3 \sum_{j=0}^{n-2} \E[\indicator{J_l=j}] \E[C_j] 
\\	&\wwrel=
		\E[T_n] \wbin+ \sum_{j=0}^{n-2} \bigl(
			\Prob(J_1=j) + \Prob(J_2=j) + \Prob(J_3=j)
		\bigr) \E[C_j]\,,
\end{align*}
which is a recurrence in the form of \wref{eq:CMT-recurrence} with weights
\begin{align*}
		w_{n,j} 
	&\wwrel= 
		\Prob(J_1=j) \bin+ \Prob(J_2=j) \bin+ \Prob(J_3=j) \;.
\end{align*}
(Note that the probabilities implicitly depend on $n$.) \\
By definition, $\Prob(J_l=j) = \Prob(I_l = j-t_l)$ for $l=1,2,3$.
The latter probabilities can be computed using that the marginal distribution of
$I_l$ is binomial $\binomial(N,D_l)$, where we abbreviate by $N \ce n-k$ the
number of ordinary elements. 
It is convenient to consider $\vect{\tilde D} \ce (D_l,1-D_l)$,
which is distributed like
$\vect{\tilde D} \eqdistt \dirichlet(t_l+1,k-t_l)$.
For $i\in[0..N]$ holds
\begin{align}
		\Prob(I_l=i)
	&\wwrel=
		\E_{\vect D}\bigl[ \E_{\vect J}[ \indicator{I_l=i} \given \vect D ] \bigr]
\notag\\	&\wwrel=
		\E_{\vect D}\bigl[
			\tbinom Ni \tilde D_1^i \tilde D_2^{N-i} 
		\bigr]
\notag\\	&\wwrel{\eqwithref[r]{lem:dirichlet-mixed-moments}}
		\binom Ni 
		\frac{(t_l+1)^{\overline i}(k-t_l)^{\overline{N-i}}}
			{(k+1)^{\overline N}}\;.
\label{eq:prob-Il-equals-i}
\end{align}

\subsection{Finding a Shape Function}
In general, a good guess for the shape function is $w(z) = \lim_{n\to\infty}
n\,w_{n,zn}$ \citep{Roura2001} and, indeed, this will work out for our
weights.
We start by considering the behavior for large $n$ of the terms 
$\Prob(I_l = zn + r)$ for $l=1,2,3$, where $r$ does not depend on $n$.
Assuming $zn+r \in \{0,\ldots,n\}$, we compute
\begin{align}
		\Prob(I_l=zn+r)
	&\wwrel= 
		\binom N{zn+r} 
		\frac{(t_l+1)^{\overline{zn+r}}(k-t_l)^{\overline{(1-z)n-r}}}
			{(k+1)^{\overline N}}
\notag\\	&\wwrel=
		\frac{N!}{(zn+r)!((1-z)n-r)!} 
		\frac{\displaystyle \frac{(zn + r + t_l)!} {t_l!}  \,
				\frac{\bigl((1-z)n - r + k - t_l - 1\bigr)!} {(k-t_l-1)!}}
		{\displaystyle \frac{(k+N)!}{k!}}
\notag\\	&\wwrel=
		\frac{k!}{t_l!(k-t_l-1)!}
		\frac{(zn+r+t_l)^{\underline{t_l}}  \,
				\bigl((1-z)n + - r + k - t_l + 1\bigr)^{\underline{k-t_l-1}}}
		{n^{\underline k}}
		\,,
\intertext{%
	and since this is a \emph{rational} function in $n$\,,
}
	&\wwrel=
		(k-t_l)\binom{k}{t_l}  
		\frac{ (zn)^{t_l} ((1-z)n)^{k-t_l-1}}{n^k}
		\cdot \Bigl(
			1 \bin+ \Oh(n^{-1})
		\Bigr)
\notag\\ 	&\wwrel=
		\underbrace{
			(k-t_l)\binom{k}{t_l} z^{t_l} (1-z)^{k-t_l-1}
		} _ {\equalscolon w_l(z)}
		\cdot \Bigl(
			n^{-1} \bin+ \Oh(n^{-2})
		\Bigr)\,,
		\qquad (n\to\infty).
\label{eq:CMT-n-prob-Il-zn-limit}
\end{align}
Thus 
$n\.\Prob(J_l = zn)  \rel=  n\.\Prob(I_l = zn-t_l)  \rel\sim  w_l(z)$, and
our candidate for the shape function is
\begin{align*}
		w(z)
	&\wwrel=
		\sum_{l=1}^3 w_l(z) 
	\wwrel= 
		\sum_{l=1}^3 (k-t_l)\binom{k}{t_l} z^{t_l} (1-z)^{k-t_l-1}\;.
\end{align*}
It remains to verify condition \wref{eq:CMT-shape-function-condition}.
We first note using \wref{eq:CMT-n-prob-Il-zn-limit} that
\begin{align}
		n \. w_{n,zn} 
	&\wwrel=
		w(z) \bin+ \Oh(n^{-1})
	\;.
\label{eq:CMT-w-n-zn-asymptotic}
\end{align}
Furthermore as $w(z)$ is a \emph{polynomial} in $z$, its derivative exists and
is finite in the compact interval $[0,1]$, so its absolute value is bounded by
a constant~$C_w$.
Thus $w:[0,1]\to\R$ is \textsl{Lipschitz-continuous} with Lipschitz constant
$C_w$:
\begin{align}
\label{eq:CMT-wz-Lipschitz}
	\forall z,z'\in[0,1] 
	&\wwrel:
		\bigl|w(z) - w(z')\bigr| 
		\wrel\le 
		C_w |z-z'|
		\;.
\end{align}
For the integral from \wref{eq:CMT-shape-function-condition}, we then have
\begin{align*}
		\sum_{j=0}^{n-1} \,\biggl|
			w_{n,j} \bin- \! \int_{j/n}^{(j+1)/n} \mkern-15mu w(z) \: dz
		\biggr|
	&\wwrel=
		\sum_{j=0}^{n-1} \,\biggl|
			\int_{j/n}^{(j+1)/n} \mkern-15mu n \. w_{n,j} - w(z) \: dz
		\biggr|
\\	&\wwrel\le
		\sum_{j=0}^{n-1} \frac1n \cdot 
			\max_{z\in \bigl[\frac jn, \frac{j+1}n \bigr]}
				\Bigl| n \. w_{n,j} - w(z) \Bigr|
\\	&\wwrel{\eqwithref{eq:CMT-w-n-zn-asymptotic}}
		\sum_{j=0}^{n-1} \frac1n \cdot
			\Biggl[
				\max_{\;z\in \bigl[\frac jn, \frac{j+1}n \bigr]}
				\Bigl| w(j/n) - w(z)\Bigr|  \wbin+ \Oh(n^{-1})
			\Biggr]
\\	&\wwrel\le
		\Oh(n^{-1}) \wbin+
			\max_{\substack{z,z'\in[0,1]:\\ |z-z'|\le 1/n}}
			\bigl| w(z) - w(z')\bigr|
\\	&\wwrel{\relwithref{eq:CMT-wz-Lipschitz}{\le}}
		\Oh(n^{-1}) \wbin+
		C_w \frac1n
\\	&\wwrel= \Oh (n^{-1}) \,,
\end{align*}
which shows that our $w(z)$ is indeed a shape function of our recurrence
(with $d=1$).

\subsection{Applying the CMT}
With the shape function $w(z)$ we can apply \wref{thm:CMT} with $\alpha=1$,
$\beta=0$ and $K=a$.
It turns out that \wildref[case]{case:CMT-H0}{\ref*} of the CMT applies:
\begin{align*}
		H
	&\wwrel=
		1 \bin- \int_0^1 z \, w(z) \, dz
\\	&\wwrel=
		1 \bin-\sum_{l=1}^3 \int_0^1 z \, w_l(z) \, dz
\\	&\wwrel=
		1 \bin-\sum_{l=1}^3 (k-t_l)\binom{k}{t_l} \BetaFun(t_l+2,k-t_l)
\\	&\wwrel{\eqwithref{eq:beta-function-via-gamma}} 
		1 \bin - \sum_{l=1}^3\frac{t_l+1}{k+1}
	\wwrel= 0\;.
\end{align*}  
For this case, the leading term coefficient of the solution is $t_n \ln(n) /
\tilde H$ with
\begin{align*}
		\tilde H
	&\wwrel=
		- \int_0^1 z \ln(z) \, w(z) \, dz
\\	&\wwrel=
		\sum_{l=1}^3 (k-t_l)\binom{k}{t_l} \BetaFun_{\ln}(t_l+2,k-t_l)
\\	&\wwrel{\eqwithref{eq:beta-log}}
		\sum_{l=1}^3 (k-t_l)\binom{k}{t_l} 
			\BetaFun(t_l+2,k-t_l)(\harm{k+1} - \harm{t_l+1})
\\	&\wwrel=
		\sum_{l=1}^3 \frac{t_l+1}{k+1}(\harm{k+1} - \harm{t_l+1})\;.
\end{align*}
So indeed, we find $\tilde H = \discreteEntropy$ as claimed in
\wref{thm:leading-term-expectation-hennequin}, concluding the proof.

Note that the above arguments actually \emph{derive}\,---\,not only prove
correctness of\,---\,the precise leading term asymptotics of a quite involved 
recurrence equation. 
Compared with \citeauthor{hennequin1991analyse}'s original proof via generating
functions, it needed much less mathematical theory.

\end{document}